	\providecommand\BibTeX{{%
			\normalfont B\kern-0.5em{\scshape i\kern-0.25em b}\kern-0.8em\TeX}}}
\newcommand{\simplock}{\textsc{SimpLock}\xspace}
\newcommand{\indirect}{\textsc{Indirect}\xspace}
\newcommand{\seqlock}{\textsc{SeqLock}\xspace}
\newcommand{\chaining}{\textsc{Chaining}\xspace}
\newcommand{\cawf}{Cached-WaitFree\xspace}
\newcommand{\came}{Cached-Memory-Efficient\xspace}
\newcommand{\calsc}{Cached-WaitFree-Writable\xspace}
\newcommand{\ourHashtable}{CacheHash\xspace}
\newcommand{\myparagraph}[1]{\paragraph{#1.}}
\newcommand{\future}[1]{} % things to do in the future
\newtheorem{theorem}{Theorem}[section]
\newcommand{\false}[0]{\textit{false}\xspace}
\newcounter{int}
\newcommand{\tab}[1][1]{\setcounter{int}{0}\loop\hspace{\algorithmicindent}\addtocounter{int}{1}\ifnum\value{int}<#1\repeat}
\begin{document}
\title{Big Atomics}

\author{ \normalsize Daniel Anderson \\ \normalsize Carnegie Mellon University \\ \normalsize dlanders@cs.cmu.edu \and \normalsize Guy E. Blelloch \\ \normalsize Carnegie Mellon University \\ \normalsize\& Google Research \\ \normalsize guyb@cs.cmu.edu \and \normalsize Siddhartha Jayanti \\ \normalsize Dartmouth \& Google Research \\ \normalsize svj.dartmouth@gmail.com \\  }

\date{}

    \maketitle 
    \begin{abstract}
Atomic operations, such as load, store, and compare-and-swap (CAS), on single-words are ubiquitous in multiprocessor programming.
Atomics on multiple adjacent words, which we refer to as {\em Big Atomics}, are a powerful construct with wide utility in concurrent programming and algorithm design.
While some languages support big atomics (e.g., std::atomic in C++), most hardware supports atomics on only up to two words (e.g., 16-byte CAS for the x86), and software solutions beyond that are often inefficient.

In this paper, we study software implementations of big atomics that support load, store, and CAS on $k$ adjacent words.
We experimentally compare several implementations of big atomics under a variety of workloads and environments (varying thread counts, load/store ratios, contention, oversubscription, and number of atomics).
This includes traditional locks, an implementation based on sequence locks (seqlocks), hardware transactional memory, and a classic lock-free implementation that uses indirection.
Our experiments reveal that seqlocks yield a very efficient implementation of big atomics that far outperform traditional locks and even the lock-free approach.
However, the performance of seqlocks drops significantly in the face of oversubscription, a condition that is common in modern computing environments.
This finding is in line with many recent findings that lock-based algorithms often have the best performance until oversubscription occurs, at which point their performance declines.

With the goal of achieving a big atomic implementation that has good performance across workloads and computational environments, we design new lock-free algorithms for big atomics, which simultaneously achieve theoretical and practical efficiency.
Our algorithms use a novel fast-path-slow-path approach which yields better cache-locality, and thereby significantly outperform the classic lock-free approach which suffers from cache-misses due to indirection. 

We also demonstrate the utility of big atomics by developing a fast concurrent hash table.
Our hash table can be implemented with or without big atomics, but our experiments demonstrate that it is significantly faster when big atomics are used.
The speed-up is significant both with our lock-free big atomics and with those based on seqlocks, but only our lock-free approach remains resilient to oversubscription. 
With big atomics, our hash table generally outperforms state-of-the-art hash tables that support arbitrary length keys and values, including implementations from Intel's TBB, Facebook's Folly,
and a recent release from Boost.

To our knowledge, this is the first systematic study of the theory and practice of big atomics.
\end{abstract}
 
    \vspace{-0.1in}
\section{Introduction}

Atomic variables form the basis of thread-safe concurrent programming, since they ensure sequential consistency for loads, stores, and read-modify-write operations, such as compare-and-swap (CAS).
In fact, almost all modern programming languages supply atomics as part of the language or as a standard library, including: C++, Rust, Java, Go, Haskell, JavaScript, Swift, and OCaml.  
However, existing implementations of atomics generally apply only to single-word (64-bit) and sometimes double-word (128-bit) variables. 

% Atomic variables supply the most basic primitives on which to design thread-safe concurrent algorithms.  They can be used to ensure sequential consistency for loads and stores on todays machines, which do not support sequential consistency by default.  Almost all programming languages supply atomics as part of the language or as a standard library, including: C++, Rust, Java, Go, Haskell, JavaScript, Swift, and OCaml.  Most of these libraries include additional read-modify-write atomic memory operations, such as compare-and-swap and fetch-and-add.  The implementations are necessarily machine-specific since different machines supply different memory models.  If the variable is a single word (or in some cases a double word), then implementing these operations in hardware comes down to a couple instructions (e.g., a store followed by a fence instruction).

Many concurrent algorithms would benefit from {\em big atomics} that cover a tuple or structure consisting of a handful of fields.   
Applications include software transactional memory~\cite{HLMS03}, multiversioning~\cite{reed78,BG83,Postgres12,SQL13,neumann2015fast,Wu17,LKA17}, hash tables (in this paper), load-linked/store-conditional~\cite{NW24}, concurrent union-find~\cite{JTB19}, concurrent binary trees~\cite{Natarajan14}, snapshotting~\cite{BCW24}, and timestamping~\cite{BJW23} (see Section~\ref{sec:prior} for more information on these).
Despite these and other applications, big atomics have no direct hardware implementation, algorithms for supporting them have been unsatisfactory (either requiring blocking or a level of indirection), and implementations of big atomics in existing languages are extremely inefficient (see Figure~\ref{fig:expsummary}).
Because of these problems, big atomics are rarely used even in situations in which they could be very useful.  
In contrast, we believe big atomics should be an efficient and widely available primitive in programming languages. % We only study C++; should we be careful about saying 'most programming languages'

In this paper, we systematically study algorithms and software implementations of big atomics that support load, store, and CAS on $k$ adjacent words.
We experimentally compare several implementations of big atomics under a variety of workloads and environments (varying thread counts, load/store ratios, contention, oversubscription, and number of atomics).
The big atomic implementations that we study include some based on classic ideas---such as an implementation based on sequence locks (\seqlock), an implementation using traditional locks, hardware transactional memory, and a classic lock-free implementation that uses indirection (\indirect)---and some new lock-free algorithms that we design. 
Our experiments reveal that seqlocks yield a very efficient implementation of big atomics that far outperform traditional locks and even the lock-free approaches.
However, the performance of seqlocks drops significantly when the system is {\em oversubscribed}, i.e., when there are more virtual threads than physical threads.
Oversubscription occurs frequently in practice \cite{BBW22}, since programs do not run in isolation and need to share physical threads with other programs that are running simultaneously on the multicore system.
This finding is also in line with recent findings that lock-based algorithms often have the best performance until oversubscription occurs, at which point their performance declines \cite{DavidGT2015, BBW22}.
Among the classic methods, \indirect is resilient to oversubscription since it is lock-free, however a layer of indirection causes it to incur frequent cache-misses that ultimately lead to poor practical performance across environments. % This sentence should be rephrased?

With the goal of achieving a big atomic implementation that has good performance across workloads and computational environments, we design new lock-free algorithms which simultaneously achieve theoretical and practical efficiency---nearly meeting the performance of \seqlock in undersubscribed environments and simultaneously showing great resilience to oversubscription.
Our algorithms use a novel fast-path-slow-path approach which mostly avoids indirection on loads, thereby yielding better cache-locality and significantly outperforming the classic lock-free approach. 
The basic idea is to support both an indirect version and an inlined, a.k.a. direct {\em cached} version.  
When a process writes to a big atomic, it first creates an indirect version which it atomically links in via a pointer, and it then copies the indirect version into the inlined (i.e., cached) version.   
This allows any readers to retrieve a consistent view of the value from the indirect version while it is being updated in the cached version, which could be blocked or delayed.  
Otherwise it can use the fast path and read the cache directly.

% Blocking algorithms are inappropriate for use in lock-free data structures, and as we demonstrate in the experiments of this paper, such algorithms perform poorly in practice when a system is {\em oversubscribed}---running more virtual threads than physical threads---since oversubscription leads to threads being scheduled out while holding a lock.
% %Oversubscription occurs frequently in practice, since programs do not run in isolation, and need to share physical threads with other programs that are running simultaneously on the multicore system.
% Indirection, on the other hand, can degrade the performance of algorithms by a factor of two by doubling the number of cache misses.
% % and therefore inappropriate for use in lock-free data structures \Guy{perhaps mention oversubscription here so it is a practical point, not just a theoretical one.}, or requiring a level of indirection, which in practice can degrade the performance of algorithms by a factor of two by doubling the number of cache misses.
% Furthermore, implementations of big atomics in existing languages are extremely inefficient (see Figure~\ref{fig:c7i-hasharray}).
% Because of these problems, big atomics are rarely used even in situations in which they could be very useful.  
% %In contrast, we believe big atomics should be an efficient primitive.

\begin{table*}
\small
\begin{tabular}{|l|c|c|c|c|}
\hline 
\bf Approach & \bf Progress &\bf Space & \bf Indirect & \bf Operations \\ \hline
Indirect (\indirect) & wait-free & $nk + O(n + p(p + k)$ & always & load + cas \\ \hline
Lock (\simplock, std::atomic)  & always block & $nk + O(n)$ & never & load + store + cas \\ \hline
Sequence Lock (\seqlock) & block on race & $nk + O(n)$ & never & load + store + cas \\ \hline
{\em * \cawf} & wait-free & $2nk + O(n + p(p+k))$ & on prior race & load + cas \\ \hline
{\em * \came} & lock-free & $nk + O(n + p(p+k))$ & on race & load + store + cas \\ \hline
{\em * \calsc} & wait-free & $3nk + O(n + p(p+k))$ & on prior race & load + store + cas \\ \hline
\end{tabular}

\caption{
Properties of various approaches to implement big atomics. 
%The last three are the three variants of our approach, and the first three are described in Section~\ref{sec:background}.  
* indicates algorithms from this paper.
For space, $n$ is the number of big atomics, $k$ is the size of each, and $p$ is the number of threads. 
%The $O(p^2)$ comes from safe memory reclamation using hazard pointers.
"On race" means that two operations are concurrent on the atomic and one is an update. "On prior race" means the prior update had a write-write race. 
The ones that support load + cas could alternatively support load + store, but not a wait-free linearizable combination of all three.   All wait free solutions take $O(k)$ time.
}
\label{tab:properties}
\end{table*}

% In this paper, we develop theoretically and practically fast implementations of big atomics, demonstrate their efficacy through experiments, and deploy them to implement a fast concurrent hash table.
% Our algorithms are lock-free or wait-free and are designed using a novel fast-path-slow-path approach which mostly avoids indirection on loads.
% \Siddhartha{Should we instead say that it avoids indirection on race-free executions? Or does that sound weaker, because the reader doesn't know whether many executions will be race-free?
% Isn't indirection only avoided on reads?}
% Our big atomics algorithms achieve lock-freedom and we use a fast-path-slow-path design to mostly avoid indirection. 
% The basic idea is to support both an indirect version and an inlined, a.k.a. direct {\em cached} version.  
% When a process writes to a big atomic, it first creates an indirect version which it atomically links in via a pointer, and it then copies the indirect version into the inlined (i.e., cached) version.   
% This allows any readers to retrieve a consistent view of the value from the indirect version while it is being updated in the cached version, which could be blocked or delayed.  
% Otherwise it can use the fast path and read the cache directly.
%Consequently, a reader only needs to access the indirect version if it is concurrent with a writer, and in this case the indirect value is likely to be in the shared cache since the writer just wrote it.

Fleshing out this fundamental idea and adding memory management via an efficient safe memory reclamation (SMR) scheme gives rise to our first implementation, {\em \cawf}, which supports load and CAS operations in time proportional to the size of the atomic.
In our second algorithm, {\em \came}, we sacrifice wait-freedom to obtain support for the store operation and additional space efficiency while retaining practical speed and lock-freedom. 
Finally, through our third algorithm, {\em \calsc}, we show a theoretical result that support for stores can be added to {\em \cawf} without sacrificing wait-freedom at the expense of a constant factor in space and additional algorithmic complexity.
Table~\ref{tab:properties} summarizes the properties of our variants as well as the properties of known techniques, which are discussed in more detail in Section~\ref{sec:prior}.

\myparagraph{Experiments}
To determine the efficiency of the various approaches to big atomics, we implemented (in C++) and benchmarked them (excluding our third, theoretical result).
We also implemented a simple concurrent hash table based on separate chaining that we designed using big atomics, which we call \ourHashtable.
\ourHashtable uses big atomics to inline the first link in each chain, avoiding a cache miss---i.e., the top-level array is an array of big atomic links, each consisting of a key, a value, and a next pointer.
Our current implementation supports types of any size as long as they are byte-wise copyable, but it is not growable.

In our experiments, we compare several implementations of big atomics, including: \cawf, \came; our own implementations of big atomics based on locks, sequence locks, and indirection; std::atomic (GNU libatomic, which uses locks), and a hardware transactional memory (HTM) based approach.
%Among the implementations, only our first and third variants, and the lock-based version support writes that are concurrent with a CAS.  
All implementations use the same interface, which is compatible with the std::atomic interface.     
For hash tables, we compare to several popular and widely used open-source library implementations~\cite{tbb07,onetbb,cuckoo14,libcuckoo,follyF14,BoostHash}, including by Google~\cite{abseil}, Facebook~\cite{follyF14}, and Intel~\cite{tbb07,onetbb}.
%TBB~\cite{tbb07,onetbb}, libcuckoo~\cite{cuckoo14,libcuckoo}, a sharded version of Folly's F14 hash tables~\cite{follyF14}, and a recent version in Boost~\cite{BoostHash}.
% \Siddhartha{Do we not also compare to the Google abseil? Should we mention the names of Intel, Meta, and Google here?}. \Guy{There are a couple more we could compare to now, including parallel-hashmap and seq-hash.   Also now the Folly table is working, with updates they made on our request.  All these are in parlayhash, but would needed to be ported over unless we can run them directly in parlayhash.}
%We tried some others including the Folly concurrent hash table, but it failed our tests (a bug report has been submitted).  
We also compare to a baseline version that uses separate chaining with indirection for the first link.

% In our experiments we compare the following implementations of big atomics: the GNU libatomic (std::atomic) which uses locks, our own lock-based version, a version that uses indirection, a version that uses sequence locks, and the \cawf and \came variants of our algorithm.    
% %Among the implementations, only our first and third variants, and the lock-based version support writes that are concurrent with a CAS.  
% All implementations use the same interface, which is compatible with the std::atomic interface.     
% For hash tables, we compare to several popular and widely used open-source library implementations including TBB~\cite{tbb07,onetbb}, libcuckoo~\cite{cuckoo14,libcuckoo}, a sharded version of Folly's F14 hash tables~\cite{follyF14}, and a recent version in Boost~\cite{BoostHash}.  %We tried some others including the Folly concurrent hash table, but it failed our tests (a bug report has been submitted).  
% We also compare to a simple baseline version that uses separate chaining with a level of indirection for the first link.

\begin{figure}
    \centering

    \medskip
    
    \begin{subfigure}[t]{0.32\textwidth}
        \centering
        \includegraphics[height=3.5cm]{graphs/c7i-pldi/hasharray-mops-vs-p-u50-n10000000-z0-w4}
    \caption{Bigatomics (load+cas)}
    \end{subfigure}%
    \begin{subfigure}[t]{0.32\textwidth}
        \includegraphics[height=3.5cm]{graphs/c7i-pldi/hashlist-mops-vs-p-u50-n10000000-z0-w0}
    \caption{CacheHash with BigAtomics}
    \end{subfigure}\qquad%
    \begin{subfigure}[t]{0.3\textwidth}
        \includegraphics[width=0.9\columnwidth]{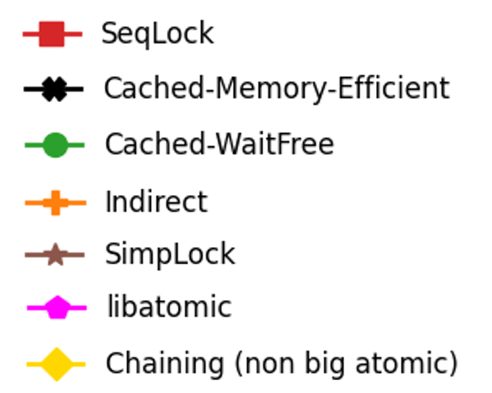}
    \end{subfigure}\qquad%
    \caption{Throughput in billions of operations per second of our big atomic implementations and our \ourHashtable{} using those big atomic implementation strategies.   The machine has 96 hardware threads.  Experiments are on 10 Million elements, and with 50\% reads (load or find), and 50\% updates (cas or insert/delete). $z=0$ means the distribution is uniform.}
    \vspace{-.15in}
    \label{fig:expsummary}
\end{figure}

We run experiments over a range of workloads on both a single-socket (48C, 96 SMT threads) machine and a quad-socket (72C, 144 SMT threads) machine.  We first measure performance on operations applied to an array of big atomics.  The measurements are made under a varying mix of reads and CASes (ranging from all reads to all CASes), over varying number of threads (including with oversubscription), over varying number of big atomics (1K--100M),  over a range of contention levels (Zipfian parameter 0 to .99), and over a range of sizes for each big atomic (8--128 bytes).   We then measure performance of the hash table using the different implementations of big-atomic.   Again we vary the mix of operations (updates vs finds), number of threads, number of entries, and contention level.  

 The experiments indicate that \came{} and \seqlock{} are almost always the fastest.   \seqlock{} is slightly faster without oversubscription, but its performance degrades significantly with oversubscription.  \indirect{} is never competitive.
 % \Siddhartha{paragraph break here?}
Big atomics also improve hashtable performance, improving over both the separate chaining version and
the best open source hash tables, but with a more limited interface.
Figure~\ref{fig:expsummary} shows a cross section of the experiments. 

%\Daniel{We should probably mention that ``\ourHashtable{} is still limited to trivially copyable element types''. at least one reviewer complained about this even though we did mention it several times last time.}

\myparagraph{Our Contributions}
% We believe this paper makes contributions both for practitioners who are looking for efficient and robust implementations of atomic values that do not
% fit into a word, and from a more theoretical side where big atomics can be used to build other data structures that inherit their properties (e.g. constant time, wait free or bounded memory).  We believe t
The contributions of our paper are:
\begin{itemize}[leftmargin=*, label=-]
    \item An {\em experimental comparison} of different implementations of big-atomics (we know of no previous such evaluation).
    \item {\em \cawf}: an $O(k)$-time implementation for big-atomics supporting load and CAS that avoids indirection for the reads in most cases.
    \item {\em \came:} a fast, lock-free implementation of big-atomics supporting load, store, and CAS with an innovative memory reclamation scheme.
    \item {\em \calsc}: a theoretical, $O(k)$-time algorithm for big-atomics supporting load, store, and CAS.
    \item An {\em efficient hash table} built using big atomics, as evidence of the their utility.
\end{itemize}

% \noindent There are many potential applications of big atomics beyond hash tables.   One application is for  version lists, which are widely used in transactional systems~\cite{reed78,BG83,Postgres12,SQL13,neumann2015fast,Wu17,lu2013generic,jiffy22,WeiBBFRS21,FernandesC11,Perelman11,DieguesR15}.  Typically in these systems just the first version needs to be accessed, but the version, which includes a value, timestamp and next pointer, is stored indirectly through a pointer so that it can be updated atomically.   A big atomic could be used to store the first version inline.   The supplementary material describes some other applications.
 % old version: "intro.tex"
    \myparagraph{Preliminaries}
\label{sec:model}
In the paper we assume a machine supports various atomic (sequentially consistent) operations on single words of memory.  At minimum we assume a load (read), a store (write) and a CAS (compare and swap or compare and exchange).  On real machines these typically require a memory operation followed by a fence, and all languages we know of that support multithreading support such atomic operations on simple single-word types (e.g. integers and pointers).   Our big (multi adjacent word) atomics will be built on top of the single-word machine atomics.

We use the standard definitions for \emph{wait-free}, \emph{lock-free} and linearizability~\cite{HerlihyShavitBook}.   Roughly speaking wait-free means every thread will make progress as long as it is taking steps, and lock-free means at least one thread always makes progress even though some threads might be stopped.   Linearizability of a data structure indicates that concurrent operations act as if they happen atomically (at a single point in time) at some point between their invocation and response.   All of our algorithms are linearizable.  Throughout the paper we use $n$ to indicate the number of big atomics, $p$ the number of threads and $k$ the size in words of each big atomic.

\section{Prior and Related Work}
\label{sec:prior}

There are two known approaches to implement big atomics: lock-based and indirect.   
The former methods are not lock-free, and importantly even loads can be blocked.    
The latter methods require a level of indirection such that all loads need to go through two pointers.  We know of no prior approach that avoids blocking on loads while also avoiding indirection.  

The simplest lock-based method, which we will refer to as \simplock{}, is to associate a lock with every big atomic and acquire the lock during every operation.  This avoids any races among operations and hence will be correct.  The performance can be particularly bad, however, since even loads contend on the lock among themselves.  
%To avoid such contention among readers one might consider using reader-writer locks that allow any number of readers to concurrently enter a lock.  However, we know of no implementation of reader-writer locks that are efficient at the small granularity required here (i.e. reading or writing a few words).

A more sophisticated lock-based method is to use \emph{sequence locks}~\cite{Hemminger02,Lameter05,Boehm12,Sullivan17} (\seqlock{}).\footnote{We know of no prior work that specifically uses sequence locks for multi-word atomics, but the application is straighforward.}  
A sequence lock associates a version (sequence) number with each big atomic.  In its standard form, an odd number indicates the atomic is locked.  A store increments to a locked state, updates the data, and then increments to the unlocked state.  A load reads the version number, then the data, then the number again.   If the numbers changed or the number is odd (locked) then it has to try again.   In practice sequence locks can be significantly more efficient than simple locks especially under load-heavy contention since loaders can proceed concurrently without acquiring a lock.  However loads are still blocking since if a writer acquires the lock and then stalls, all loads have to wait for the writer to unlock.   Sequence locks can also starve the loads while only writers make progress.

%Lastly, SeqLocks are often troublesome to implement correctly within the constraints of programming language's memory models due to the possibility of the load operation tearing from a concurrent store and the lack of memory-ordering on this load if performed without language-blessed atomics (which would likely degrade the performance). The latest C++ concurrency technical specification (Concurrency TS2~\cite{concurrency-ts2}) is attempting to address this by adding a \emph{bytewise-atomic memcpy} function that could be used to perform the load and store operations safely, efficiently, and with the correct memory ordering side effects.

The second approach to implement big atomics is to use a level of indirection.  In this approach each atomic variable holds a pointer to the actual value, usually heap allocated.  In the simplest form, which we will refer to as \indirect{}, a load reads through the pointer to get the value.  A store allocates a location for the new value on the heap, writes the new value into the location, and then tries to update the pointer using a single-word CAS.  The main challenge is safely and efficiently managing the heap memory.   There are several theoretical results using this approach for multiword load-linked store conditional (LL/SC)~\cite{Michael04,AndersonMoir99,JayantiP05,BlellochWei20}.   The latest supports operations with time and space given in Table~\ref{tab:properties} and no need for unbounded counters~\cite{BlellochWei20}.
LL/SC can be used to easily implement Load+Store or Load+CAS with the same bounds, but not a linearizable combination of all three.

Aghazadeh, Golab and Woelfel~\cite{AGW14} describe an indirect approach that supports all three, but requires two levels of indirection.  Jayanti, Jayanti and Jayanti~\cite{JJJ23} describe an alternative approach that avoids the two levels
of indirection by using two adjacent pointer slots.  Both these approaches are quite sophisticated, and we use some of their ideas in our our Load-Store-CAS variant.

From a performance perspective, the problem with all forms of indirection is that every load needs to first read the variable, and then follow the stored pointer to the allocated location.  Both of these could be a cache miss, and furthermore the second depends on the first so the two reads cannot be pipelined.  In practice, the extra level of indirection can slow down performance by up to a factor of two.
In languages that do not have garbage collection, approaches using indirection require some additional care, and cost.  In particular the heap allocated objects containing the indirect values need to be read protected using some form of safe memory reclamation (SMR)~\cite{hazard04,epoch04} to avoid a free of the object while being read.  This has additional costs.

Since we use indirect heap-allocated memory for our indirect nodes in our algorithms, we also need to use SMR.  For this purpose we use hazard pointers~\cite{hazard04}.  Such pointers require protecting an object when a pointer to it is read, via a protected read.   When an object is unlinked from the data structure, instead of deleting it immediately it is \emph{retired}, which delays the deletion until a time when it is no longer protected.   Importantly, our approach does not need to protect a pointer in the common fast path access to the cached version, since it never reads through the pointer.

Big atomics can be implemented with
hardware Transactional Memory (HTM)~\cite{htm}.  In principle they are the ideal application of HTM since often a big atomic will reside on a single cache line requiring the HTM mechanism to only track one line.  HTM, however, always requires a software backup mechanism if the transaction fails, which will require either locks or indirection.      
In Section~\ref{sec:benchmarks} we run experiments with HTM. We find that it never outperforms the software-only approaches, even on low contention where the HTM transactions almost always succeed. 
Furthermore, HTM has been disabled on new processors since 2021 due to security issues~\cite{htmdead} (we therefore had to run HTM comparisons on an older machine).
%One could combine HTM with our approach making it lock-free and mostly direct, but given the experimental results this did not seem like a useful path.

\vspace{-.05in}
\myparagraph{Appications of big atomics}
There are many applications of big atomics that appear in prior work.   Here we mention a few.     In multiversion concurrency control~\cite{reed78,BG83,Postgres12,SQL13,neumann2015fast,Wu17,LKA17} for database systems one needs to store for each version of a value, the value at that version, a time stamp, and a next version pointer.  Using big-atomics allows the first version, which is most commonly accessed, to be stored inline and updated atomically.  This saves a level of indirection as long as the big-atomics themselves do not introduce a level-of-indirection.  In software transactional memory systems, multiple fields often need to be updated atomically.   For example in DSTM~\cite{HLMS03} it is necessary to atomically update a pointer to a transaction descriptor, an old object pointer, and a new object pointer.   Without an big atomic these systems require a level of indirection for which references to an object must go through.  In concurrent data structures it is often necessary to atomically update multiple fields.  For example in Natarajan binary search trees~\cite{Natarajan14} one needs to atomically update the the two children and a deleted flag.  They develop a complicated scheme to do this atomically with only single word operations, but with a big atomic it becomes trivial.    Jayanti and Tarjan develop an efficient concurrent union-find data structure~\cite{JTB19}.   It requires updating three fields atomically.
Other data structures include a snapshot algorithms~\cite{BCW24}, which requires several applications of big-atomics on up to five fields, a wait-free fetch-and-increment from CAS~\cite{EW13}, requiring a 6-field big-atomic, a strongly linearizable LL/SC from CAS~\cite{NW24}, requiring a 4-field bigatomics, a DCAS (CAS on two unrelated location)~\cite{BGW24}, requiring a 3-field big atomic, and bounded time stamps~\cite{BJW23} requring a 4-field big-atomics.

We note that all of these require big atomics on between three and six fields.   Some of these could potentially be packed into two words (e.g. by stealing the three low-order bits from a 8-byte aligned pointer, or stealing the top 16-bits since linux currently only supports 48-bit pointers).  But even if this is possible,  such packing is not portable, and makes for error prone code.

    \section{Algorithms}

We now describe our algorithms for big atomics.  This consists of three algorithms that share the following characteristics.   They maintain both a ``\emph{cached}'' (inline) copy of the data, and a pointer to a heap-allocated ``\emph{backup}'' copy.  Loads first try to read from the cached copy (fast path) and only go to the backup (slow path) if necessary.   The cached copy is protected by both a sequence number and some indication on the backup pointer specifying whether the
cache is valid.   Updates first install the backup copy, with the pointer marked as invalid, and then attempt to copy the data to the cached copy.   If successful, they re-validate the pointer.

In our first, and simplest, algorithm (\cawf) 
load and CAS take $O(k)$ time.   However, this algorithm requires that the backup copy is always populated, leading to a factor of two in wasted space.   Our second algorithm (\came) avoids the extra space by replacing the backup with a null pointer after the cache is installed.  
%The algorithm allows us to bound the number of backup nodes that are required.  
However, this comes at the cost of making both the load and the CAS lock free instead of $O(k)$ time.  Algorithmically, a notable difference in this algorithm is that updates might need to help other updates complete.    Our third algorithm allows for concurrent loads, stores and CAS operations.   It uses a load-store big-atomic as a black box, in addition to maintaining a second pointer.  By using the \cawf{} algorithm as the black-box, all three operations take $O(k)$ time.

%  In the code this is information is indicated by the base class \lstinline{hazard_pointer_object_base} (following the hazard pointer API proposed for C++26~\cite{hpcpp26}).    

 \begin{algorithm*}[t]
%\resizebox{\textwidth}{!}{%
\begin{multicols}{2}
\begin{lstlisting}[basicstyle=\scriptsize\ttfamily,linewidth=.99\columnwidth, xleftmargin=5.0ex,numbers=left]
template<typename T>
class BigAtomic {
	class Node : hazard_pointer_object_base<Node> { 
      T value; };

	atomic<uint64_t> version; @\label{fig:wfrc:version}@
 	atomic<Node*> backup;
	T cache; @\label{fig:wfrc:cachev}@
 // if marked then cache is invalid
  bool is_marked(Node* p);  @\label{fig:wfrc:ismarked}@
  Node* mark(node* p); // mark pointer as invalid
  Node* unmark(node* p);  // strip the mark @\label{fig:wfrc:unmark}@

 public:
	BigAtomic() : version{0},
		backup{new Node{T{}}}, cache{} { }
 
	T load() {
    uint64_t ver = version.load();
    T val = cache;		// Bytewise-atomic load
    Node* p = backup.load();@\label{fig:wfrc:loadbackup}@
    if (!is_marked(p) && ver == version.load())@\label{fig:wfrc:testvalid}@
      return val;		// Fast path
    // The following two lines ensure the node 
    // pointed to by p is protected from reclamation
    // until this function returns
    hazard_pointer h = make_hazard_pointer(); 
    Node* p = h.protect(backup);   // load and protect @\label{fig:wfrc:loadbackupprotected}@
    return p->value; }

	bool cas(T expected, T desired) {
		hazard_pointer h = make_hazard_pointer();
		uint64_t ver = version.load();
		T val = cache;		// Bytewise-atomic load
		Node* p = h.protect(backup);@\label{fig:wfrc:backupread}@
		if (is_marked(p) || ver != version.load())
			val = p->value;
		if (val != expected) return false;@\label{fig:wfrc:notexpected}@
		if (expected == desired) return true; @\label{fig:wfrc:isdesired}@
		auto new_p = mark(new Node{desired});
		auto old = p;
		if (backup.compare_exchange_strong(p, new_p) || @\label{fig:wfrc:install}@
				(p == unmark(old) &&
				 backup.compare_exchange_strong(p, new_p))) {  @\label{fig:wfrc:retry}@
			retire(p);  // delayed reclamation of old backup
			if ((ver % 2 == 0) && ver == version.load() && @\label{fig:wfrc:testversion}@
					 version.compare_exchange_strong(ver,ver+1)) { @\label{fig:wfrc:lock}@
				cache = desired;  // Bytewise-atomic store
				version.store(ver + 2); @\label{fig:wfrc:unlock}@
				backup.compare_exchange_strong(p, unmark(p));} @\label{fig:wfrc:clear}@
			return true; @\label{fig:wfrc:failcache}@
		} else { delete new_p; return false; } }

	~BigAtomic() { delete backup.load(); } 
};
\end{lstlisting}
\end{multicols}
%}

\caption{A wait-free big atomic supporting Load and CAS (in C++).  For safe memory reclamation we use the hazard pointer API proposed for C++26~\cite{hpcpp26} (i.e., \lstinline{hazard_pointer_object_base}, \lstinline{make_hazard_pointer}, and \lstinline{protect}). 
}
\label{alg:wait-free-load-cas}
\end{algorithm*}

\subsection{Cached Wait-Free}\label{sec:wait-free-load-cas}

Code for our baseline algorithm, \cawf{}, is given in Algorithm~\ref{alg:wait-free-load-cas}.   For each big atomic, the algorithm maintains a version number, a pointer to the indirect version, which we will refer to as the \emph{backup} pointer, and the inlined version, which we will refer to as the \emph{cache} (Lines~\ref{fig:wfrc:version}--\ref{fig:wfrc:cachev}).   In addition, the algorithm maintains indirect values, which we will refer to as \emph{node}s.  In this algorithm the backup pointer always points to an node, which just contains a copy the value stored in the atomic, and perhaps some information for the memory manager.  This copy is always up-to-date (i.e., the linearization point of an update is when the pointer changes).
 To help maintain consistency between the backup and the cache, the algorithm maintains an extra mark on the backup pointers (e.g., stolen from the least-significant-bit), which if set indicates the cache is invalid (Lines~\ref{fig:wfrc:ismarked}--\ref{fig:wfrc:unmark}).

\myparagraph{The Load Operation}
A \texttt{load} starts by running a similar sequence of steps as are required by sequence locks---i.e., reading the version number, then the cache, and then the version number again.   However, in addition to reading the cache before the second read of the version number, it also reads the backup pointer.   It now checks both that the backup pointer is not marked and that the version number has not changed (Line~\ref{fig:wfrc:testvalid}).   If both are true the cache was read during a period of time in which it was not being updated and for which the cache was valid, and we can therefore return the value.
This is the fast path---no indirect values were accessed, and no memory needed to be protected.  If either the backup pointer is invalid or the version number changed, then the algorithm has to go to the slow path and read the backup node.    Standard sequence locks would check whether the version is odd to ensure it is not locked.   We need to check not just that it is not locked, but also that the contents is valid.   No mark on the backup pointer tells us both.

Before reading though to the backup node, we need to protect it from a potential read-reclamation race---i.e. another process could come in, update the backup pointer, retire the node, free it, and reuse it.    The algorithm uses hazard pointers for this purpose. % using \texttt{protect} to read a pointer.  
This has some non-negligible cost.

\myparagraph{The CAS Operation}
A CAS starts similarly to a load since it needs the current value to compare to the expected value.   The only difference is that the backup node is protected earlier.  The reason for this is that it is important for correctness that the compare-exchange on Line~\ref{fig:wfrc:install} only succeed if the value of \lstinline{p} has not changed since it is read on Line~\ref{fig:wfrc:backupread}.   By protecting it, the pointer cannot be recycled, hence preventing an ABA problem.  Once the value has been read we
check whether it equals the expected value and if not return false.  Also if it equals the desired value we return true.   It is important for the correctness of the algorithm that a value is not replaced with an equal value since this would change the value of the backup pointer which could cause a concurrent CAS to fail when it should have succeeded.

In the case that the expected value matches the actual value, the algorithm tries to install the new desired value.  This involves first installing a backup node with the value and then trying to copy the value to the cache.   The pointer to the new node is marked as invalid since when first installed as a backup, the cache is not valid.   If the backup is successfully installed, and the value successfully copied into the cache, then we attempt to remove the invalid mark on the backup pointer (Line~\ref{fig:wfrc:clear}).

The algorithm uses the C++ \texttt{compare\_exchange\_strong}\footnote{This compiles to a machine compare-and-exchange, which is like a CAS but also updates its expected value with the value at the location if it fails} to attempt to install the backup.
The compare-exchange might need to be attempted twice since the first attempt could fail due to the old backup pointer being converted from invalid to valid between when the old pointer was read and the compare-exchange.  Hence, if the first try (Line~\ref{fig:wfrc:install}) fails on an invalid pointer, we try again, but with the validated version of the old pointer as the expected value (Line~\ref{fig:wfrc:retry}).  If both attempts fail, the CAS has failed so we delete the allocated backup node and return false.   If an attempt succeeds, the CAS has succeeded so we retire the old backup node, and attempt to install the cached value.  %Note that  the backup pointer is the "central" variable and always holds the current state of the atomic.

To install the cached value we use an approach similar to sequence locks.  In particular, we try to increment the version number from even to odd (Line~\ref{fig:wfrc:lock}), which takes a lock on the cache, and if successful, we copy into the cache, and unlock it with a second increment (Line~\ref{fig:wfrc:unlock}).  
A subtle difference from sequence locks is that in addition to checking that the version number is even, we also check that the version number has not changed since before we installed the backup node.  This is to ensure we are not overwriting a value of a more recently installed backup node.

When there is a concurrent update, the validation might fail due to failure to acquire
the lock on Line~\ref{fig:wfrc:lock} or to validate the backup pointer on Line~\ref{fig:wfrc:clear}.  This can leave the cache invalid after the CAS completes.
%Importantly, however, the algorithm ensures that whenever the cache is validated (mark bit is unset and version number is even) the cache is an exact copy of the backup.  
However, validation will always succeed if not concurrent with another CAS.

\myparagraph{Bounds and Correctness}
The time and space bounds and the correctness is summarized by the following theorem, for which there is a proof in the supplementary material.

\begin{theorem}
The big atomic object described in Algorithm~\ref{alg:wait-free-load-cas} has linearizable loads and CASes. 
Furthermore for $n$ big atomics each of size $k$, and $p$ processes, all operations take $O(k)$ time, and the total memory usage is $2nk + O(n + p(p+k))$.
\end{theorem}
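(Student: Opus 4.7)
The plan is to prove the three claims in turn: linearizability of loads and CASes, $O(k)$-time wait-freedom, and the $2nk + O(n + p(p+k))$ space bound. The linearizability argument, which I expect to be the hardest part, rests on three invariants maintained by all operations. First, \textbf{(I1):} the backup pointer always references a node whose value equals the current abstract state of the big atomic; this follows because the only step that modifies the backup pointer is the successful compare-exchange on Line~\ref{fig:wfrc:install} or~\ref{fig:wfrc:retry}, which installs a freshly allocated node holding the caller's desired value. Second, \textbf{(I2):} whenever the backup pointer is unmarked, the cache field equals the value of the node the backup refers to. This follows because the only transition from marked to unmarked is the compare-exchange on Line~\ref{fig:wfrc:clear}, which can succeed only for the CAS that installed the current backup; and between that install and its unmark, the same CAS must have acquired the lock (matching the even version it read before the install on Line~\ref{fig:wfrc:lock}), written the desired value into the cache, and released the lock on Line~\ref{fig:wfrc:unlock}. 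Third, \textbf{(I3):} whenever the version number is odd, the backup pointer is marked. This holds because every CAS installs a marked pointer before even attempting to lock, any interfering concurrent install also writes a marked pointer, and no CAS can execute the unmark on Line~\ref{fig:wfrc:clear} while the version is odd (the guarding even-parity lock check on Line~\ref{fig:wfrc:lock} would have failed for any such CAS).

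Given these invariants, linearization points are assigned as follows: a successful CAS at its winning install (Line~\ref{fig:wfrc:install} or~\ref{fig:wfrc:retry}); a CAS that fails because the observed value differs from \texttt{expected} at the read that observed the mismatch; a CAS that fails at the install compare-exchange immediately after the concurrent successful install that caused the failure; a slow-path load at the protected backup read on Line~\ref{fig:wfrc:loadbackupprotected}, whose return value is the abstract value at that moment by \textbf{(I1)}; and a fast-path load at the backup read on Line~\ref{fig:wfrc:loadbackup}. The main obstacle is justifying the fast-path load: I would split by the parity of the observed \texttt{ver}. If \texttt{ver} is even, monotonicity of the version counter plus its observed equality at both endpoints implies no CAS held the lock during the interval, so the cache did not change during the load; combined with \textbf{(I2)} applied at the backup-read moment, the returned cache value equals the abstract value at that moment. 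If \texttt{ver} were odd, \textbf{(I3)} would force the backup to be marked throughout the interval, contradicting the fast-path check; so this case is impossible. Either way the chosen linearization point carries the correct return value.

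For wait-freedom and the $O(k)$ time bound, I would observe that both \texttt{load} and \texttt{cas} are straight-line code with no unbounded loops: the CAS uses at most one explicit retry of the install (Line~\ref{fig:wfrc:retry}), and the hazard-pointer \texttt{protect} primitive is assumed wait-free per the C++26 API. All remaining work is either constant-time or a single byte-wise copy of the $k$-word cache or node value, giving $O(k)$ steps per operation. For space, each of the $n$ big atomics uses a version word, a backup pointer, an inlined $k$-word cache, and a $k$-word heap node always reachable through backup, accounting for $2nk + O(n)$. Standard hazard-pointer reclamation with a constant number of slots per process leaves at most $O(p^2)$ unreclaimed retired nodes at any time, each of size $O(k)$, and per-process working buffers contribute $O(pk)$, yielding the $O(p(p+k))$ overhead.
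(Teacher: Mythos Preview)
Your linearizability argument follows essentially the same route as the paper's proof sketch: the linearization points you pick (backup-pointer read for loads, successful install for a winning CAS, a concurrent install for a losing CAS, the prior load for the early-return cases) coincide with the paper's, and your invariants \textbf{(I1)} and \textbf{(I2)} are exactly the two invariants the paper highlights.  Your explicit \textbf{(I3)} is a nice sharpening that the paper uses only implicitly.  One gap in your \textbf{(I2)} justification: you argue that the unmarking CAS wrote \texttt{desired} into the cache before unmarking, but you do not rule out that some \emph{other} CAS overwrote the cache between this CAS's unlock and its unmark.  The paper closes this by observing that any such earlier CAS would fail the \texttt{ver == version.load()} check on Line~\ref{fig:wfrc:lock}, since the unmarking CAS has already advanced the version; you need that step too.

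Two further issues in the non-linearizability parts.  First, the time bound: you assume the hazard-pointer \texttt{protect} is wait-free ``per the C++26 API,'' but the standard \texttt{protect} is a read--announce--reread loop and is only lock-free in general.  The paper instead appeals to an atomic-copy-based scheme~\cite{BlellochWei20} to get $O(1)$ protection, and separately cites a wait-free allocator~\cite{blelloch2020concurrent} for \texttt{new}/\texttt{delete}; you should do the same rather than treat these as given.  Second, your space arithmetic does not add up: ``$O(p^2)$ unreclaimed retired nodes, each of size $O(k)$'' yields $O(p^2k)$, not $O(p(p+k))=O(p^2+pk)$.  To match the stated bound you need a reclamation scheme that keeps the \emph{global} number of retired-but-unfreed nodes at $O(p)$ (so the nodes contribute $O(pk)$) with an additional $O(p^2)$ for bookkeeping, rather than the standard per-thread $O(p)$-sized retire lists you describe.
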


The correctness depends on two key invariants: (1) the current backup node always holds the current value, and (2) whenever the backup pointer is not marked as invalid, the current backup node and the cache hold the same value.
The time bound comes from the fact the operations have no loops beyond copying the values, which take $O(k)$ time, and that the memory allocation and reclamation can be implemented in constant time.

\begin{proof}[Proof Sketch]
For correctness,
we first identify the linearization points.  For the load, if it succeeds on reading the cached value the linearization point is at the load of the backup pointer on Line~\ref{fig:wfrc:loadbackup}.
%In fact, there is a linearization interval between the two reads of the sequence number.   
If the load does not succeed on reading the cached value, the linearization point is at the protected read of the backup pointer on Line~\ref{fig:wfrc:loadbackupprotected}.    For the the CAS there are a few cases.    As described, the CAS starts by doing
a load, so at Line~\ref{fig:wfrc:backupread} we have a linearized load of the contents.
If the CAS returns on lines~\ref{fig:wfrc:notexpected} or ~\ref{fig:wfrc:isdesired} then it linearizes at this load.   If the CAS succeeds, the linearization point is at the compare and exchange that installed the backup node.   If both compare and exchanges fail, then the cas linearizes at the first update by another thread of the backup pointer after the linearized read (Line~\ref{fig:wfrc:backupread}).  There must be such an update otherwise the the CAS would have succeeded.

We now consider the correctness of a validated backup pointer---i.e. if it is not marked as invalid then the cached copy equals the backup copy.
A CAS only validates its backup pointer after a copy of its value has been installed in the cache, and the validation only succeeds if the backup pointer has not changed since the CAS installed it.  Note this relies on the pointer being protected so that it is not recycled.
No earlier CAS could modify the cache between when it is unlocked and
the validation of the backup pointer since this earlier cas would fail the \lstinline{ver == version.load()} test on Line~\ref{fig:wfrc:testversion}.   Hence when the pointer is validated, the cache is indeed valid with the same value in the cache as in the backup pointer.    Furthermore the cache has been valid since the CAS unlocked the cache by incrementing the version number.

If a load succeeds reading from the cache, it must have seen a valid backup pointer.  If it sees the same version number before and after reading the valid backup pointer, the first read must be after the cache was updated and unlocked by the CAS that set the valid backup pointer (see two sentences prior).  Since the cache can only be modified again by first incrementing the version number the cache cannot have been modified in the interval between the two reads of the version number.  Hence
during this whole period the cache is a valid copy of the valid backup pointer so the read of the cache will return the same value as pointed to by the validated backup pointer.
If the reading of the cache fails, reading the backup pointer will get a correct value since updates linearize on a successful installation of the backup pointer (i.e. the value in the backup pointer is always the current value).

The first two cases for the CAS (expected not equal to val and desired equals val) just follow from a correct load into val.  Furthermore the test that expected does not equal val ensures that an update will always change the value of the backup.
Therefore if both compare and exchanges fail when installing the backup pointer, then the value in the backup pointer must have changed.    We can therefore linearize on the update (by someone else).  If the cas successfully installs its backup node, it linearizes at that
point.   It will now attempt to lock the cache and only succeed if the cache is not already locked, and if the version has not changed since before the installation of its node.   This means no later successful CAS has updated the cache.  If the lock is successfully acquired, the backup pointer cannot be validated until the lock is released since no other CAS could succeed to update the cache, which is required to validate the backup pointer.

For time we note that since the only loop is over the word-length taking $O(w)$ time to copy it in and out.
  The only non-trivial functions the algorithm calls are, protecting with the hazard pointer, and the memory management operations (new, retire, and delete).  Safe memory reclamation with hazard pointers can be made constant time using an atomic copy~\cite{BlellochWei20}, and the memory management can be made constant time assuming an upper bound on the number of big atomics is known ahead of time~\cite{blelloch2020concurrent}.  

For space we note that the space is taken by the objects themselves each of which is of size $w + 2$ (the data, the pointer and the counter), and the nodes.    There are exactly $n$ nodes installed at all time each of size $w + 1$, and hazard pointers ensure at most $O(p (p + w))$ extra space~\cite{hazard04}.   The total is therefore bounded by $2w + O(n + p(p + w))$.

\end{proof}

\subsection{Cached Memory Efficient}\label{sec:lock-free-load-cas}

Our second algorithm optimizes the memory used by the big atomic.  In particular it avoids the need to keep a backup on every big atomic, and allows us to bound the number of backup nodes currently installed to be at most $p$,
and the total needed by the algorithm to $O(p(1 + p/k))$ (the extra ones are needed for efficient memory reclamation).  This comes at the cost of making it lock-free instead of $O(k)$ time, however, its performance in practice almost always surpasses that of the wait-free algorithm due to memory savings (the pool of backup nodes typically fits in the last level of cache).    The code is given in Algorithm~\ref{alg:lock-free-low-mem}. It uses a custom memory management scheme, described at the end of this subsection, and thus avoids the need for a system allocator, which is likely not lock-free.

\begin{algorithm*}[!ht]
\small
\begin{multicols}{2}
\begin{lstlisting}[basicstyle=\scriptsize\ttfamily,linewidth=.99\columnwidth, xleftmargin=5.0ex,numbers=left, escapechar=\$]
template<typename T>
class BigAtomic {
 // all but value used for memory reclamation
	class Node {
		union { T value; Node* next_; };
		unsigned owner{thread_id()};		// owner's ID
		atomic<bool> is_installed{false};
		bool was_installed{false};
		bool is_protected{false};
	};

	atomic<uint64_t> version;
 	atomic<Node*> backup;
	T cache;
    
  // create null pointer from version
  Node* tagged_null(uint64_t v); 
  bool is_null(Node* p); // checks if null

 public:
	BigAtomic() : version{0},
		backup{tagged_nullptr{0}}, cache{} { }
 
	T load() {
		uint64_t ver = version.load();               $\label{code:lf-load-fast1}$
		T val = cache;		// Bytewise-atomic load
		Node* p = backup.load();
		if (is_null(p) && ver == version.load())
			return val;		// Fast path             $\label{code:lf-load-fast2}$
		hazard_ptr_holder h = make_hazard_pointer();
		while (!try_load_indirect(ver, p, val, h)) {}
		return val;	}

	bool cas(T expected, T desired) {
		uint64_t ver = version.load();
		T val; Node* p;
		hazard_ptr_holder h = make_hazard_pointer();
		if (!try_load_indirect(ver, p, val, h)) 
          return false;
		if (val != expected) return false;
		if (expected == desired) return true;
	
		auto new_p = get_free_node(desired);
		auto old = p;
		if (backup.compare_exchange_strong(p, new_p))) {
			if (!is_null(p))	p->is_installed.store(false);
			try_seqlock(ver, desired, new_p, h);
			return true;
		} else if (!is_null(old) && is_null(p)) { $\label{code:secondcasstart}$
			ver = version.load();
			val = cache;		// Bytewise-atomic load
			if (ver % 2 == 0 && ver == version.load() &&
					val == expected &&
					backup.compare_exchange_strong(p, new_p)) {
				try_seqlock(ver, desired, new_p, h);
				return true; } }  $\label{code:secondcasend}$
		free_node(new_p);
		return false; }

	void store(T val) { while (!cas(load(), val)); } $\label{code:lf-store}$
 
 private:
	bool try_load_indirect(/* out-params */ uint64_t& ver,
			Node*& p, T& val, hazard_ptr_holder& h) {
		p = h.protect(backup);  $\label{code:lf-load-indirect1}$
		if (!is_null(p)) { val = p->value; return true; }
		ver = version.load();
		val = cache;  // Bytewise-atomic load
		p = backup.load();
		return (is_null(p) && ver == version.load());	}  $\label{code:lf-load-indirect2}$

	void try_seqlock(uint64_t ver, T desired, Node* p,
			hazard_ptr_holder& h) {
		while (ver % 2 == 0 && ver == version.load() &&
				version.compare_exchange_strong(ver, ver+1)) {  $\label{code:lf-acquire-seqlock}$
			cache = desired;		// Bytewise-atomic store
			version.store(ver += 2);
			auto new_p = tagged_null(ver);   $\label{code:lf-tagged-null}$
			if (backup.compare_exchange_strong(p, new_p) {  $\label{code:lf-uninstall-node}$ 
				p->is_installed.store(false); return; }
			else if (is_null(p)) return;  $\label{code:lf-uninstall-null1}$
			p = h.protect(backup);  $\label{code:lf-retry-cache1}$
			if (is_null(p)) return;    $\label{code:lf-uninstall-null2}$
			desired = p->value; } }  $\label{code:lf-retry-cache2}$

  // The rest is for memory management
  // Threads each have a private pool of nodes
  thread_local Node nodes[3P];		
	thread_local IntrusiveStack<Node*> free_nodes;

  Node* get_free_node(T val) {
		if (free_nodes.is_empty()) reclaim();
		Node* node = free_nodes.pop();
		node->value = val;
		node->is_installed.store(true);
		return node; }

  void free_node(Node* node) {
		node->is_installed.store(false);
		free_nodes.push(node); }

	void reclaim() {
		for (Node& n : nodes) {  $\label{code:lf-check-uninstalled-loop}$
			n.was_installed = n.is_installed.load(); }  $\label{code:lf-check-uninstalled}$
		for (Node* p : get_protected_ptrs()) {  $\label{code:lf-scan-hp}$
			if (p->owner == tid) p->is_protected = true; }
		for (Node& n : nodes) {
			if (!n.was_installed && !n.is_protected) {
				free_nodes.push(&n); }
			n->is_protected = false; } }    
};
\end{lstlisting}
\end{multicols}
\caption{A lock-free big atomic supporting Load and CAS. Lock-free store can be trivially implemented as a CAS loop and is omitted from the code for space. The memory requirement is $O(p^2k)$ additional heap memory for $p$ threads, independent of the number of big atomics.
}
\label{alg:lock-free-low-mem}

\end{algorithm*}

\myparagraph{Uninstalling backup nodes after caching}  The \cawf{} algorithm achieves its progress bounds by using the fact that at any given moment, the backup pointer always contains the live value, and hence it can be read under the protection of a hazard pointer to load the value regardless of any ongoing operation. %This has the downside that every big atomic always has a backup node installed in it, even when there is no contention and it will never be used. 
Our relaxed low-memory algorithm instead provides the weaker guarantees that at any given moment, \emph{either the backup pointer contains the live value, or it is null and the cache contains the live value}.  
This results in loads being lock-free rather than wait-free since a reader could fail to read the cache due to an ongoing update, then find a null pointer in the backup since an update completed, and repeat.  However each round implies one update completed, and hence the loads are lock-free.  

In \cawf{} we used a mark on the pointer to indicate whether the cache was valid or not, while in this algorithm the cache is valid exactly when the pointer is not null.  Both algorithms validate the cache after it is updated by swapping in a valid pointer (a null pointer in this algorithm, and an unmarked pointer in the wait-free algorithm).

The \cawf{} algorithm used hazard pointers to ensure that if the compare and exchange that installs the new pointer succeeds then the value has not changed since it was read (i.e., there was no ABA issue).   However, with a null pointer there is no such protection.  To mitigate this, we use \emph{tagged null pointers}, which are implemented as version numbers tagged by some bit to distinguish it from a real pointer.   Our code uses the version number from the sequence lock (Line~\ref{code:lf-tagged-null}).

Loads start by trying the same fast path as \cawf{} algorithm~(Lines~\ref{code:lf-load-fast1}--\ref{code:lf-load-fast2}). If it fails, the slow path repeatedly tries to load the indirect value and the cache back and forth until one of them succeeds (Lines~\ref{code:lf-load-indirect1}--\ref{code:lf-load-indirect2}).   

The CAS operation has the same structure as the wait-free algorithm, but differs in a couple important ways.      Firstly, as with the wait-free algorithm, if the compare and exchange that installs the new backup fails, it
might need to try again since it could have failed because the pointer was validated.
  In this algorithm, however, this is more complicated since we cannot check if unmarking the old pointer equals the new pointer (the new one in this algorithm is now null).  Instead the algorithm re-reads the cached value and checks it still equals the desired value (Lines~\ref{code:secondcasstart}--\ref{code:secondcasend}).  Secondly, installing the cached value is quite different, as described next.

\myparagraph{Re-caching until success}  As described in Section~\ref{sec:wait-free-load-cas}, if two CAS operations race in the wait-free algorithm then the validation of the cache might fail.   This means the backup copy cannot be removed, as required in this algorithm.   We solve this problem using helping.   In particular, a CAS $A$ helps cache another CAS $B$'s value when it observes that its own value was overwritten by $B$ in the backup.  After observing that it has been overwritten~(failing the \lstinline{compare_exchange} on Line~\ref{code:lf-uninstall-node}), $A$ protects and loads the value that overwrote it~(Lines~\ref{code:lf-retry-cache1}--\ref{code:lf-retry-cache2}), then repeats the loop to attempt to cache the new value until either it succeeds (wins the \lstinline{compare_exchange} on Line~\ref{code:lf-uninstall-node}), someone else takes the sequence lock~(fails the \lstinline{compare_exchange} on Line~\ref{code:lf-acquire-seqlock}), or someone else overwrites $B$ and successfully caches their value, restoring consistency to the cache~(by observing a null backup pointer, Lines~\ref{code:lf-uninstall-null1} or \ref{code:lf-uninstall-null2}). The ultimate result of this scheme is that the number of active backup nodes is never
more than the number of in-progress writes since if there is a non-null backup pointer installed there must be a thread working on it.  

%\Guy{I can't remember what the problem is with regular hazard pointers.   Is there a theoretical issue, or is it an issue of practical efficiency?  I guess it avoids the need for a memory allocator, and avoids the need for hashing.  BTW, what is an "instrusiveStack?"}

\myparagraph{Store} We can easily implement a lock-free store operation with repeated tries---see Line~\ref{code:lf-store}.

\myparagraph{Recycling thread-private nodes} The \cawf{} algorithm relied on properties of the allocator which would require a custom allocator.   For this algorithm, given we are particularly concerned about memory, we spell out the allocation scheme, which avoids any calls to a system alloc/free except when starting a thread.
  As with \cawf{}, we employ Hazard Pointers, but this time with a novel twist.  In a standard Hazard Pointer reclamation scheme, the thread that retires a node after uninstalling it becomes responsible for reclaiming it.  Instead, we assign each thread their own private slab of nodes that only they may use and reclaim.  At a high level, threads pull nodes from their own local free list until it becomes empty, at which point they perform a reclamation operation. A reclamation operation involves scanning all of their nodes and reclaiming those that are neither (1) still installed, nor (2) protected by a hazard pointer. Since at most $p$ nodes can be installed and at most $p$ can be protected by a hazard pointer, all but $2p$ nodes are reclaimed, so as long as each thread has $3p$ nodes, at least $p$ nodes will be reclaimed at a cost of $O(p)$ work.  

The recycling scheme avoids the use of randomization by avoiding the hashtable used in most hazard pointer implementations by making use of the fact that nodes are thread specific and its ability to augment the nodes with additional information.  Each node stores several new fields in addition to the value. An atomic flag is used for writers to signal that a node has been uninstalled and hence is viable for reclamation, and two private flags, \lstinline{was_installed} and \lstinline{is_protected} are used to track whether a node is eligible for reclamation. When scanning the set of announced hazard pointers (Line~\ref{code:lf-scan-hp}), instead of adding them to a hash table like the standard algorithm, our algorithm can simply check whether the node is its own and if so, use the \lstinline{is_protected} field directly on the node.  This is one of the advantages of having thread-private nodes.  Note that while subtle, it would be very tempting but \emph{very incorrect} to free a node if \lstinline{(n.is_installed.load() == false && !n.is_protected)}, as this would fail if the owning thread were to scan the hazard pointer array while $n$ is not announced, then get scheduled out while another thread announces $n$ and uninstalls it, leading to $n$ being reclaimed despite being protected.  Instead, a node can only be reclaimed if the owner observed on Line~\ref{code:lf-check-uninstalled} that it was uninstalled \emph{before} scanning the announcement array, since this guarantees that if the node was protected, it must have been protected \emph{before} it was uninstalled, and hence the owner will definitely observe the announcement.

\myparagraph{Deamortized reclamation} Using the reclamation scheme as written, we reclaim at least $p$ nodes for $O(p)$ work at most once per $p$ writes, and hence the cost of reclamation is $O(1)$ amortized.  A simple demamortization of the reclamation scheme brings it down to $O(1)$ worst-case time.  Simply assign each thread a slab of $6p$ nodes, then for every write operation performed by the thread, perform at least $6$ iterations of the next loop of the \lstinline{reclaim} method. Every $3p$ writes the algorithm will complete a full reclamation phase, where at most $p$ nodes were protected by hazard pointers, and $2p$ nodes were installed\footnote{The bound of $2p$ comes from the fact that $p$ nodes can be installed at any moment in time, and the thread may install up to $p$ additional nodes while iterating the loop (Line~\ref{code:lf-check-uninstalled-loop}) since iterations are interleaved with operations.}, hence at least $3p$ nodes are reclaimed.

We summarize the result in the following theorem.
We outline the theorem's proof in the supplementary material.  % Appendix~\ref{sec:proof-variant-3}.

%\Guy{need to check theorem and write proof.}
\begin{theorem}
The big atomic object described in Algorithm~\ref{alg:lock-free-low-mem} has linearizable loads, stores, and CASes. 
Furthermore for $n$ big atomics each of size $k$, and $p$ processes, all operations that do not race take $O(k)$ time, operations are always lock-free, and the total memory usage is $nk + O(n + p(p + k))$.
\end{theorem}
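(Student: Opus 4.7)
The plan is to mirror the structure used for \cawf{}, adapted to the two main new features of \came{}: (i) the backup pointer is set to a \emph{tagged null} (carrying the sequence number) rather than a marked pointer to indicate that the cache is valid, and (ii) writers help one another re-cache values so that backup nodes can be uninstalled, guaranteeing the $p$-bound on live nodes. The central invariant I would establish and maintain throughout is:

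\textbf{Invariant.} At every configuration, either \texttt{backup} points to a real node whose \texttt{value} field is the current logical value, or \texttt{backup} is a tagged null and the \texttt{cache} word holds the current logical value; moreover, whenever \texttt{backup} is a tagged null, the version number is even and no writer currently holds the sequence lock.

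\textbf{Linearization points.} I would assign: a successful fast-path load linearizes at its load of \texttt{backup} (since the pre/post version check certifies the cache was stable); a slow-path load linearizes at its protected read of \texttt{backup} on Line~\ref{code:lf-load-indirect1} if it observes a non-null pointer, or at its re-read of \texttt{backup} in \texttt{try\_load\_indirect} otherwise. A CAS that returns \false{} on value mismatch linearizes at the load inside \texttt{try\_load\_indirect}. A CAS that swings \texttt{backup} to a new non-null node (either at the first or the retry \texttt{compare\_exchange}) linearizes at the successful swing. A CAS that fails both attempts linearizes at the first update to \texttt{backup} strictly after its initial read; the retry block between Lines~\ref{code:secondcasstart}--\ref{code:secondcasend} is what makes this well-defined, because it rules out failure caused merely by the old pointer being replaced by a tagged null with the same value. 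A \texttt{store} inherits the linearization point of whichever CAS succeeds inside its loop.

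\textbf{Establishing the invariant.} The first half (backup non-null $\Rightarrow$ holds the live value) follows because only CAS swings \texttt{backup} to a non-null node, and it does so exactly at its linearization point. The second half (tagged null $\Rightarrow$ cache is live) is the subtle part and depends on the helping scheme in \texttt{try\_seqlock}. I would argue: whenever a thread publishes a tagged null on Line~\ref{code:lf-uninstall-node} (or \ref{code:lf-tagged-null}), it had the sequence lock and had just stored \texttt{desired} into \texttt{cache}; the tag on the null is the new post-unlock version, so any concurrent thread that sees this tagged null sees a version consistent with the cache update. A helper that fails its \texttt{compare\_exchange} on Line~\ref{code:lf-uninstall-node} either observes that another thread already installed a tagged null (and returns, preserving the invariant), or else sees a newer non-null backup, protects it, and repeats; inductively it will either win a future round or observe a tagged null produced by some winner.

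\textbf{Progress.} For lock-freedom of CAS: every time a CAS fails on Line~\ref{code:lf-uninstall-node} inside \texttt{try\_seqlock}, some other thread must have installed a newer backup or acquired the sequence lock; since each such event corresponds to a successful CAS by another thread, infinitely many CAS failures imply infinitely many CAS successes. The same argument handles the install step (first/retry \texttt{compare\_exchange}) and the load slow-path loop, because a load that neither sees a non-null backup nor a stable cache witnesses an interleaved update. Stores are lock-free as a CAS loop over lock-free CAS and load.

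\textbf{Time and memory.} For non-racing operations, the fast path of \texttt{load} finishes in $O(k)$ for the bytewise copy; a non-racing CAS takes one successful install, one successful sequence-lock acquisition, one cache copy, and one successful un-install, all in $O(k)$. For memory, the $nk$ term counts the in-place \texttt{cache} fields, the $O(n)$ term counts version numbers and backup slots, and the $O(p(p+k))$ term covers the $O(p)$-sized thread-private slab of $O(k)$-word nodes per thread plus hazard-pointer metadata; the helping scheme ensures at most one installed node per in-progress write, bounding installed nodes by $p$ and justifying the reclamation analysis sketched in the main text.

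\textbf{Main obstacle.} The hardest part, I expect, is arguing rigorously that the retry block on Lines~\ref{code:secondcasstart}--\ref{code:secondcasend} correctly distinguishes the ``lost to another CAS'' case from the ``merely re-cached'' case despite the ABA potential of tagged nulls, and in tandem showing that the helping loop in \texttt{try\_seqlock} terminates in each execution with the cache consistent with \texttt{backup}. Both require a careful case analysis over the handful of distinct transitions of the \texttt{(version, backup, cache)} triple, ideally formalized as a small state machine whose transitions are the lines of the algorithm and whose reachable states all satisfy the Invariant above.
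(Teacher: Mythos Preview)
Your proposal is essentially the same approach as the paper's: the same central invariant (backup non-null holds the live value, otherwise the cache does), the same linearization-point assignment for loads and for successful/failed CASes, and the same lock-freedom and space arguments. One small gap: you give a linearization point for a CAS that returns \false{} ``on value mismatch,'' but you do not separately treat the case where \texttt{try\_load\_indirect} itself returns \false{} and the CAS returns \false{} immediately without ever comparing to \texttt{expected}; the paper handles this by observing that the value must have changed during that call (since installed values are never equal to the prior value), so there is a moment when it differs from \texttt{expected}, and linearizes there. Also, the extra clause in your invariant (``version even and no writer holds the sequence lock whenever \texttt{backup} is a tagged null'') is stronger than what the paper uses and would need its own justification; the paper gets by with the weaker coupling that stability of the version together with a null backup implies stability of the cache over that interval.
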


\begin{proof}[Proof Sketch]
The goal is to show the linearizability of the load, CAS, and store operations.
First, we note that store is clearly linearizable if load and CAS are, since for any linearizable implementation of load/CAS, a linearizable implementation of store can be added which simply repeatedly tries to load and CAS (Lines 62-63)---store linearizes at the first successful CAS, and the rest of its actions are invisible.
Thus, it remains to show the linearizability of load and CAS, which we do below.

The key invariant of the algorithm is that the value stored in backup is the value of the big atomic, unless backup is a (versioned) null pointer, in which case the cached value is the value of the pointer.
The version number is monotonically increasing, and coordinated with the backup pointer in order to ensure that if the version number is stable in an interval of time, and the backup pointer is null at some point in that interval of time, then the cached value is stable and equal to the value of the big atomic throughout that interval.
All updates to the value of the object are made by changing the backup pointer (via a CAS); and new values that are installed via the backup pointer are never the same as the old value of the object, i.e., there is no AA problem \cite{ABH18}.
These observations help us identify and prove the linearization points of load and CAS.

For loads there are three cases.
(1) If the load succeeds on reading the cached value at Line 28 (i.e., it returns that value at Line 31), then the linearization point is at any time in between the reads of version which match (i.e., Lines 27 and 30).
This follows from the fact that the version number did not change in between Lines 27 and 30, and that the backup pointer read at Line 29 was null, indicating that the cached value was stable throughout this timeframe and equal to the value of the big atomic.
(2) Otherwise, if load succeeds on reading the value in the backup pointer at Line 69, then the linearization point is at the protected read of that backup pointer at Line 68.
This follows from the fact that the protected pointer read yielded a non-null pointer, and whenever the pointer is non-null, it holds the value of the object.
Since the pointer was read with protection, the byte-wise reading of the value on Line 69 is safe.
(3) Finally, if load succeeds on reading the cached value at Line 71, then the linearization point is at any time in between the reads of version at Lines 70 and 73.
The proof of this case is analogous to the first case.

For the CAS there are several cases.
(1) If the CAS does not succeed in loading a value at Line 41, then we know that the value of the big atomic must have been changing during the execution of \texttt{try\_load\_indirect} call at that line.
Since, a new value installed in a backup pointer is never equal to the old value of the big atomic, this means that one of the values the object took on in this interval is not equal to $expected$.
Thus, the linearization point of the CAS is at a time when the value was not equal to $expected$.
(2) If load at Line 41 is successful, then that load is linearizable by the argument in the previous paragraph. 
Thus, if CAS returns at Lines 41 or 41---identifying that the loaded value was not the expected value or that the loaded value was the expected value, but the value need not be changed since it is also the desired value---then the CAS linearizes at the linearization point of the load.
(3) If the CAS continues, and returns at Line 49, then its linearization point is at Line~46, where it successfully exchanges \texttt{new\_p}--which holds the desired value---into the backup pointer.
(4) Similarly, if that attempt fails due to cache validation, and thus the CAS tries again and succeeds and returns at Line~57, then its linearization point is at Line~55.
(5) Finally, if the \texttt{compare\_and\_exchange} attempts fail, not due to cache validation, but since some other process successfully compares and exchanges, then the value of the big atomic must have changed due to that operation, thus becoming unequal to $expected$, and thereby serves as the linearization point for the return of false at Line~59. 

Notice that all operations complete in a bounded number of steps after their linearization points.
Thus, the algorithms are lock-free, since the progress of any operation can only be impeded by the linearization of another operation.

For space we note that the space is taken by the objects themselves each of which is of size $w + 2$ (the data, the pointer and the counter), and the nodes.    
Only processes with active operations have un-retired nodes, thus the total space is $nw + O(n + p(p + w))$.

\end{proof}

%{\color{green} We can add a simple lock-free Store operation to either Read-CAS and LL/SC.}

%{\color{blue} Can all the algorithms easily be modifed to do LL/SC versions.}

    % \input{reclamation}    
    \subsection{Wait-free Load/Store/CAS (WD-LSC)}

We present our third big atomic algorithm---a wait-free, constant time, writable big atomic---in Algorithm~\ref{alg:wait-free-load-store-cas}.
Given an object that supports Load and CAS, $Store(desired)$ can famously be implemented in two steps: a load-step, $old \gets Load()$, followed by CAS-step, $CAS(old, desired)$.
However, if implemented in this way, stores can be preempted by successful CAS operations between the load- and CAS-steps.
Retrying the two-step solution until success would solve this issue, but doing so sacrifices wait-freedom, because there is no bound on the number of times that a writer can be preempted.
Thus, a more intricate solution is needed.
In fact, it is well known that the {\em writability problem} of also implementing Store, given objects supporting only Load and CAS is tricky, even if the implemented object need only be single-word \cite{jayanti1998WritablePrimitives, AGW14,JJJ23}.
Agazadeh, Golab, and Woelfel~\cite{AGW14} give a general solution to this problem, however their solution applied to big atomics requires two levels of indirection. 
Recently, Jayanti, Jayanti, and Jayanti (JJJ)~\cite{JJJ23} gave an efficient solution to writability in the setting of single-word durable atomics for persistent memory systems.
Their solution uses hardware {\em double-words} supporting Load and CAS to implement (durable) single-words that also support Store.

We distill the key ideas from the JJJ construction---{\em write-buffering} and {\em helping}---and combine them with new ideas that are unique to big atomics to achieve our result: a constant time writable big atomic that requires hardware support for only {\em single-word} variables.

The {\em central variable} in this implementation is $Z$, which is a non-writable big atomic (supporting \texttt{load} and \texttt{cas}). Since $Z.value$ always holds the value of the writable big atomic, a \texttt{load()} simply returns this value (Line~11).
Along with the value of the object $Z.value$, this big atomic also stores a sequence number $Z.seq$ to prevent the ABA-problem, and a {\em mark bit} $Z.mark$; in total it holds a triple.
Since, $Z$ already supports Load and CAS, the key difficulty is in incorporating the Store operation.
In order to incorporate \texttt{store} operations, we employ a single variable $W$, which we call the {\em write-buffer}.
To store a new value $desired$, a process $\pi$ first attempts to emplace a pointer to a node containing $desired$ into $W$ and then transfer the value into the central variable $Z$ with the help of other {\em updaters}, i.e., {\em writers} performing \texttt{store} operations and 
{\em casers} performing \texttt{cas} operations).
The principal reason for this two step process is to ensure that neither writers nor CASers starve.
A key feature of our representation, is that the pointer stored in $W$ is also {\em marked} with a mark bit that is either 0 or 1, and the marks on $W$ and $Z$ are mismatched (one is 0 and the other is 1) if and only if there is a {\em pending write}, i.e., $W$ has been updated, but the updated value is yet to be transferred to $Z$.
When the transfer happens, it happens via a \texttt{cas} on $Z$, and that \texttt{cas} flips $Z.mark$ to ensure this invariant.

\begin{algorithm*}[t]
\small

\begin{multicols}{2}
\begin{lstlisting}[basicstyle=\scriptsize\ttfamily,linewidth=.99\columnwidth, xleftmargin=5.0ex,numbers=left]
template<typename T>
class WritableBigAtomic {
	class Value { T value; uint64_t seq; bool mark; };
	class Node : hazard_pointer_object_base<Node> {
		T value; };
	
	BigAtomic<Value> Z;
	atomic<Node*> W;
 
 public:
	T load() { return Z.load().value; }

	void store(T desired) {
		hazard_pointer h = make_hazard_ptr();
		Node* w = h.protect(W);
		Value z = Z.load();
		if (z.value == desired) return;
		if (z.mark == is_marked(w)) {
			Node* n = mark(new Node{desired}, 1 - z.mark);
			if (W.CAS(w, n)) retire(w);
			else delete n;
		}
		if (!help_write()) help_write()	}

	bool cas(T expected, T desired) {
		for (int i = 0; i < 2; i++)
			Value z = Z.load();
			if (z.value != expected) return false;
			if (expected == desired) return true;
			help_write();
			if (Z.CAS(z, {desired, z.mark, z.seq+1}))
				return true;
		return false; }

	bool help_write() {
		Value z = Z.load();
		hazard_pointer h = make_hazard_ptr();
		Node* w = h.protect(W);
		if (z.mark != is_marked(w))
			return Z.CAS(z, {*w, is_marked(w), z.seq+1});
		else return true; }

	~WritableBigAtomic() { delete W.load(); } };
\end{lstlisting}
\end{multicols}

\caption{A wait-free big atomic supporting Load, Store, and CAS using the Load/CAS big atomic from Algorithm~\ref{alg:wait-free-load-cas}.  The memory requirement is $O(nw + p^2w)$ additional heap memory to support $n$ big atomics across $p$ threads.}
\label{alg:wait-free-load-store-cas}
\end{algorithm*}

Since, writers use nodes, this memory needs to be managed; we do this through hazard pointers.
To $\texttt{store}$ a value $desired$, a process $\pi$ first does a hazard protected read of the pointer $w$ stored in $W$ (Lines~14-15), reads the triples in $z$ in $Z$ (Line~16) and returns early if $Z$ already holds the value it wants to write (Line~17).
Otherwise, $\pi$ checks whether the marked bits match (Line~18).
If the marks do not match, there is a pending \texttt{store}.
Thus, $\pi$ helps ensure the transfer of the pending write from $W$ to $Z$ by executing \texttt{help\_write} operations (Line~23; we explain this \texttt{help\_write} transfer procedure later).
The process that buffered its write in $W$ linearizes upon successful transfer, and $\pi$ linearizes its store operation immediately before this time.
In this branch, $\pi$ does not even try to insert its value into $W$, since it is guaranteed to linearize {\em silently} in the way described thus far.
On the other hand, if the bits do match, then $\pi$ attempts to insert a pointer $n$ to a node containing its value and mismatch the bits to indicate that its write is pending (Lines~19-21).
If the insertion is successful (CAS on Line 20), then its value will be transferred to $Z$ (Line~23);
otherwise, some other process must have successfully inserted its value into $W$, and $\pi$ can linearize silently immediately before that value is transferred to $Z$.

The \texttt{help\_write} subroutine simply reads the value $z$ of $Z$, the pointer $w$ in $W$, and attempts to update the value of $Z$ to the value pointed to by $w$ and rematch the marked bits if they are mismatched (Lines~36-40).
A successful CAS on Line~40 must increment the sequence number $Z.seq$ to ensure that there are no ABA problems (i.e., a really old helper will not succeed if $Z.value$ recycles).
If the marks are mismatched initially, the helper returns whether it was able to successfully transfer the value (Line~40).
If the marks already matched, the helper knows that someone else has already successfully transferred and thereby returns true (Line~41).
A \texttt{help\_write} can fail to transfer a pending write, i.e., return false even though some other process has not already transferred the value, only if $Z$'s value was changed by a successful \texttt{cas}.
We will later prove that this can happen at most once while there is a pending write.
Thus, trying twice (Line~23), ensures that the pending write in $W$ (if any) will surely be transferred.

The introduction of \texttt{store}s complicates the implementation of \texttt{cas}.
In particular, a process $\pi$ wishing to CAS from $expected$ to $desired$ cannot simply perform a CAS operation on $Z$.
This is because casers must help writers transfer pending writes.
In the code for \texttt{cas}, we ensure that this type of operation never changes the contents of the triple in $Z$ if $Z.value$ would remain unchanged.
In particular, to ensure this, $\pi$ first reads $Z$ into $z$, and returns $true$ if the expected value is there, but the desired value is also the same as the expected value (Line~29);
of course, it returns $\false$ if the the expected value is not there (Line~28).
In order to ensure progress to writers, $\pi$ next calls \texttt{help\_write} (Line~30),
and finally attempts to execute its CAS operation on $Z$ (Line~31).
If the CAS on Line~31 succeeds, then $\pi$ has successfully updated the value from $expected$ to $desired$ and thus returns $true$.
Otherwise, we know that the value of $Z$ has changed from $z$; however, it is possible that $Z.value$ has remained constant and equal to $expected$ throughout the run.
In particular, $Z$ may have only changed because a writer successfully over wrote $expected$ with $expected$, thereby incrementing $Z.seq$ and flipping $Z.mark$.
Thus, rather than return $\false$, $\pi$ tries again (Line~26).
Importantly, the CAS on Line~31 cannot fail for the same reason a second time in a row unless the value in $Z.value$ changed from $expected$ some time during $\pi$'s execution of the current operation.
The reason has to do with the check on Line~17 of \texttt{store}.
Thus, if the second attempt fails, $\pi$ is guaranteed that $Z.value$ was not constantly $expected$ throughout the interval of its operation, and can thus safely return $\false$ (Line~33).

To conclude, we make the following observations about the algorithm.
The marked bits are key to the helping.
Initially, when there is no pending write, the marked bits match.
The marked bits can only get mismatched when a new write is installed, at which point the write is pending.
The marked bits can only get re-matched when the pending write is successfully transferred to $Z$.
In particular, this implies that the (pointer-valued) value of $W$ cannot change while a write is pending.

We summarize the result in the following theorem.
We outline the theorem's proof in the supplementary material.  % Appendix~\ref{sec:proof-variant-2}.

\begin{theorem}
The big atomic object described in Algorithm~\ref{alg:wait-free-load-store-cas} has linearizable loads, stores, and CASes. 
Furthermore for $n$ big atomics each of size $k$, and $p$ processes, all operations take $O(k)$ time, and the total memory usage is $3nk + O(n + p(p+k))$.
\end{theorem}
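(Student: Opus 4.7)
The plan is to establish three invariants about the state $(Z, W)$ that together force every operation to admit a clean linearization point, then use the time/space bounds of the inner Load/CAS big atomic $Z$ (from Algorithm~\ref{alg:wait-free-load-cas}) as a black box. The invariants are: (I1) $Z.value$ always equals the current logical value of the object; (I2) $Z.mark$ and $\text{is\_marked}(W)$ disagree if and only if there is a \emph{pending write}, i.e., a successful $W.\textsc{cas}$ whose buffered value has not yet been transferred into $Z$; and (I3) while a write is pending, the pointer stored in $W$ does not change. I will prove these by induction on the linearized sequence of single-word atomic steps, using the fact that the check on Line~17 (return early if $Z.value = desired$) prevents any writer from installing a node that would later transfer $expected$ into $Z$, and that the $seq$ field of $Z$ together with the hazard protection on $W$-nodes eliminates ABA on both the $Z.\textsc{cas}$ on Line~31/40 and the $W.\textsc{cas}$ on Line~20.

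Given the invariants I will assign linearization points as follows. A \texttt{load} linearizes at its call to $Z.\text{load}()$ on Line~11 (by I1). A \texttt{cas} that returns at Line~28 or Line~29 linearizes at the $Z.\text{load}$ that produced $z$; a \texttt{cas} that returns \true at Line~32 linearizes at its successful $Z.\textsc{cas}$ on Line~31; a \texttt{cas} that returns \false at Line~33 linearizes at some step inside its own execution at which $Z.value \ne expected$, whose existence is the key obligation below. A \texttt{store} that installs its own node via the $W.\textsc{cas}$ on Line~20 linearizes at the successful $Z.\textsc{cas}$ that eventually transfers that node's value (by itself or by a helper); a \texttt{store} that does not install (either because $z.value = desired$, or because the marks already mismatched, or because its $W.\textsc{cas}$ failed) linearizes silently immediately before the successful transfer of whichever value was sitting in $W$ at that moment. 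Invariant (I3) is what makes the silent linearization unambiguous, since it guarantees that during any pending-write window exactly one $W$-value is being transferred.

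The main obstacle, and the heart of the proof, is to show that \texttt{store} is wait-free and that \texttt{cas} terminates after at most two tries. For \texttt{store}, I will argue that at most one successful $Z.\textsc{cas}$ performed by a \texttt{cas} operation can occur while a particular write is pending: such a $Z.\textsc{cas}$ flips $Z.mark$ and thereby re-matches the marks, which by (I2) would end the pending-write state; but by (I3) the $W$ pointer is unchanged, and any subsequent \texttt{cas} that reaches Line~31 must first call \texttt{help\_write} on Line~30, which upon observing matched marks will transfer the pending value (or confirm its transfer) before the \texttt{cas} attempts its own $Z.\textsc{cas}$. Hence the first of the two \texttt{help\_write} calls on Line~23 can be blocked by at most one racing \texttt{cas}, and the second call is therefore guaranteed to transfer any still-pending write, giving \texttt{store} a bounded step complexity. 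For \texttt{cas}, a symmetric argument shows that if Line~31 fails twice in a row then some successful $Z.\textsc{cas}$ (by a helper or by another \texttt{cas}) must have changed $Z.value$ strictly away from $expected$ during the operation, which supplies the linearization point required above for returning \false.

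Finally, the $O(k)$ time bound follows because every subroutine performs only a constant number of $Z$-operations, $W$-reads/CASes, and node copies, each of which is $O(k)$ by the theorem for Algorithm~\ref{alg:wait-free-load-cas} and by bytewise copying. For space, $Z$ contributes $2nk + O(n + p(p+k))$ by that same theorem (its triple payload is still of size $k + O(1)$), while $W$ contributes $n$ atomic pointer slots plus at most one live node of size $k$ per big atomic, for $nk + O(n)$ more, plus $O(pk)$ in-flight allocations protected by hazard pointers and $O(p^2)$ retired-list overhead. Adding these yields the claimed $3nk + O(n + p(p+k))$.
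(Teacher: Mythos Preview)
Your scaffolding—the invariants (I1)–(I3), the linearization-point assignment, and the space/time accounting—tracks the paper's argument closely and is basically sound. The central wait-freedom step, however, rests on a misreading of the code. You claim that a successful Line~31 $Z.\textsc{cas}$ ``flips $Z.mark$ and thereby re-matches the marks,'' but Line~31 installs $\{desired,\, z.mark,\, z.seq{+}1\}$: the mark is \emph{copied unchanged}. Only the transfer CAS on Line~40 changes $Z.mark$. Your next sentence, that a subsequent caser's \texttt{help\_write} ``upon observing matched marks will transfer the pending value,'' is then doubly off: after a Line~31 success the marks are still \emph{mismatched}, and \texttt{help\_write} attempts a transfer precisely when it sees a mismatch (Line~39), returning \true without touching $Z$ when it sees a match. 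So as written, your argument does not establish the at-most-one bound.

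The bound you want does hold, but for a timing reason rather than a mark-flip. For a Line~31 CAS to succeed, $Z$ must be unchanged between that caser's Line~27 read and its Line~31; hence its intervening \texttt{help\_write} must see matching marks on its own fresh reads of $Z$ and $W$ (otherwise it would either change $Z$ or witness a change). During a pending interval $Z.mark$ is still the old bit while $W$ already carries the new bit, so that caser's \texttt{help\_write} must have read $W$ \emph{before} the write was buffered. A second Line~31 success would have to read its $z$ after the first one (since $Z.seq$ advanced), hence after the pending interval began, so its \texttt{help\_write} reads the new $W$, sees the mismatch, and either transfers or observes $Z$ change—either way defeating its own Line~31. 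This is what underlies both the two-try bound for \texttt{store} on Line~23 and, dually via the Line~17 check ruling out two consecutive transfers that both leave $Z.value=\texttt{expected}$, the correctness of \texttt{cas} returning \false at Line~33. A smaller gap: when \texttt{store} returns at Line~17 with no write currently pending there is no transfer to linearize ``immediately before''; the paper handles that case separately by linearizing at the Line~16 load.
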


\begin{proof}[Proof sketch]
The value of the object is initially $Z.value$, because it is initialized that way.
In order to prove that this remains the case, and that the operations are linearizable, we identify the linearization points of each operation below:
\begin{enumerate}[leftmargin=*]
\item 
A \texttt{load} operation linearizes at the point at which $Z.\texttt{load}()$ on Line~11 linearizes.
\item
The linearization point of a \texttt{store}$(desired)$ operation depends on the execution of the operation.
In particular:
(1) if the condition in the if-statement on Line~17 evaluates to true, then the operation linearizes at the linearization point of $Z.\texttt{load()}$ on Line~16, when the value of $Z$ is equal to $desired$.
(2) if the condition in the if-statement on Line~18 evaluates to true, and the CAS on Line~20 succeeds, then the pointer $n$ to a node with value $desired$ is installed in $W$ on Line~20.
Furthermore, it is guaranteed that the mark bits in $W$ and $Z$ are become mismatched at the time of the installation.
A \texttt{store} operation that successfully installs its value in the write-buffer in this way linearizes at the time when this pending write is successfully transferred to $Z$ via a successful CAS at Line~40 by some helping process.
This transfer is guaranteed to happen before the \texttt{store} operation returns, since the first \texttt{help\_write} on Line~23 either succeeds, or \texttt{help\_write} is called again, and the value of $Z$ can change at most once without the pending write transferring, since \texttt{cas} operations also perform \texttt{help\_write} on Line~30.
(3) if the condition in the if-statement on Line~18 evaluates to false, or it evaluates to true and the CAS on Line~20 fails, then there is guaranteed to be a pending write some time before the execution of Line~23.
In this case, the \texttt{store} linearizes immediately before this pending write is successfully transferred to $Z$ (and is therefore missed by all \texttt{load} operations, and never actually appears in $Z$).
\item
The linearization point of a \texttt{cas} operation also depends on the execution of the operation.
In particular:
(1) if the \texttt{cas} operation returns $\false$ at Line~28 or $true$ at Line~29, then the linearization point is at the load on Line~27.
(2) if the CAS operation on Line~31 succeeds, the operation linearizes at the time of this CAS.
Note that this CAS can succeed only if $Z.value$ changes from $expected$ to $desired$ at the time of the CAS.
(3) if none of the previous cases occurs, then the \texttt{cas} operation must return $\false$ at Line~33.
In this case, we observe that this operation does not change the value of $Z$ via the CAS on Line~31.
This type of CAS operation linearizes at the first time after its invocation when $Z.value \ne expected$.
Now, it only remains to argue that there is indeed such a time before the \texttt{cas} operation returns.
Assume to the contrary that the $Z.value = expected$ throughout the execution of the \texttt{cas} in question.
In that case, it must be the case that the CAS on Line~31 fails twice.
Thus, we notice that $Z$ must change at least twice while $Z.value$ remains equal (to $expected$) at each change.
Since \texttt{cas} operations never change the value of $Z$ (on Line~31) without changing $Z.value$, this must imply that two consecutive installed \texttt{store} operations both install $expected$ on top of a $Z$ variable whose value is already $expected$.
However, this is not possible, since \texttt{store} operations check to see if the value they intend to install is different from $Z.value$ (Line~17), and this check cannot fail for both of the installing writes, unless $Z.value$ is something other than $expected$ at some point in the interval of the \texttt{cas} in question.
\end{enumerate}
The proof of linearizability is concluded, since we have been able to identify a linearization point for every operation.
\end{proof}

    \section{Inlined Hash Table with Big Atomics}% \& Other Applications}
\label{sec:hashtables}

Here we describe our inlined lock-free hash table.   We first describe the version without inlining.  
We start with a standard hash-table using separate chaining---each bucket points to an unsorted linked list of key-value pairs.  
Searches simply go down the list until they either find the key, or get to the end of the list.   Inserts first search to find if the key is already in the list, and if not they add a new link to the front of the list and use a CAS to swap the new head for the old head, if it has not changed.   If the CAS fails, the operation retries.   The delete uses path copying.   In particular, it first searches the list for the key to delete.   If not found, then the operation is done.   If found at a link $l$, then it copies all the links ahead of $l$ in the list, and points the last of these to the link one past $l$.   This splices out the link $l$ without modifying any existing links.  Finally, the delete uses a CAS to swap in the new head of the list.  If the CAS fails, then the copies are retired and the operation is retried.   If the CAS succeeds, the copied links are retired.   We use epoch-based memory management to protect the links that are being read.  

The inlined version of the hash table simply places the first link directly into the bucket instead of using a pointer.   Since the link is a triple of a key, value, and next pointer, and the key and value can be an arbitrary size, this needs to use a big atomic.   Also since a bucket could be empty, we need an additional flag to indicate whether the first link is an actual link or empty.    We use a bit from the next pointer for this purpose.  Note that null and empty pointers in the next pointer have distinct meanings---the first indicates a list of length one, and the second a list of length zero.

%\Guy{This should be in the experiments.}
%The benefit of avoiding indirection in the hash table will not be as great as for just a big atomic.   In particular if a bucket is empty, there is no difference in the cost of the direct and indirect versions---neither will have to follow a link.   Furthermore, even if not empty, the indirect version still needs to traverse to the second link and beyond if there is one.   It, however, does save one cache miss when the bucket is not empty.    We use a load factor of one, with the size rounded to the next power of two.   Our experiments show that the direct version improves performance by up to 20\%.

There are other ways to take advantage of inlined values in a hash table, such as using linear probing.    Linear probing, however, makes it difficult to remove elements concurrently.    
%Tombstones can be used, but they cannot be removed, filling up the table.  
More sophisticated approaches might be used (e.g., robin hood hashing~\cite{KellyPM18}), but we found this simple approach seems to already give better performance than existing concurrent hash tables that support variable-length keys and values.
    \section{Benchmarks}
\label{sec:benchmarks}

We have implemented our big atomics as a C++ library and evaluated them on several sets of benchmarks, one set comparing their performance directly in a simple microbenchmark, and another in which the big atomics are used to implement a chaining hashtable (\ourHashtable{}) using the strategy described in Section~\ref{sec:hashtables} where the first element of a chain (the majority of elements assuming low load factor) is stored inline in the table. We run
our experiments across two different machines, a newer single-socket machine and an older multi-socket machine that still supports HTM. The first machine is a \texttt{c7i.metal-24xl} instance from Amazon EC2, which is equipped with an Intel Xeon Platinum 8488C with 48 cores (96 SMT threads), eight-channel DDR5-4800 memory (totalling 192GB), and 105MB of L3 cache.  The second machine is equipped with four Intel Xeon E7-8867 v4 totalling 72 cores (144 SMT threads), each with four-channel DDR4-2400 memory (totalling 1TB), and 45MB of L3 cache (per socket). All experiment code is written in C++ and compiled with GCC 13 with optimization level \texttt{O3}. We link against the \emph{jemalloc}~\cite{jemalloc} allocator, and on the multi-socket machine, benchmarks were executed with \emph{numactl -i all} to interleave memory allocations across the sockets.
We enabled the Linux \emph{transparent\_hugepage} feature which allows the system to allocate larger-then-page-size blocks.
% \Guy{Say that hugepages are enabled?}

\begin{figure}

    \centering

    \begin{subfigure}{\columnwidth}
      \centering
  		\includegraphics[width=.85\columnwidth]{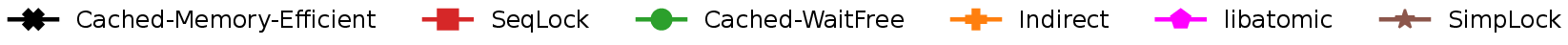}
    \end{subfigure}

    \medskip
        
        \includegraphics[height=3.4cm]{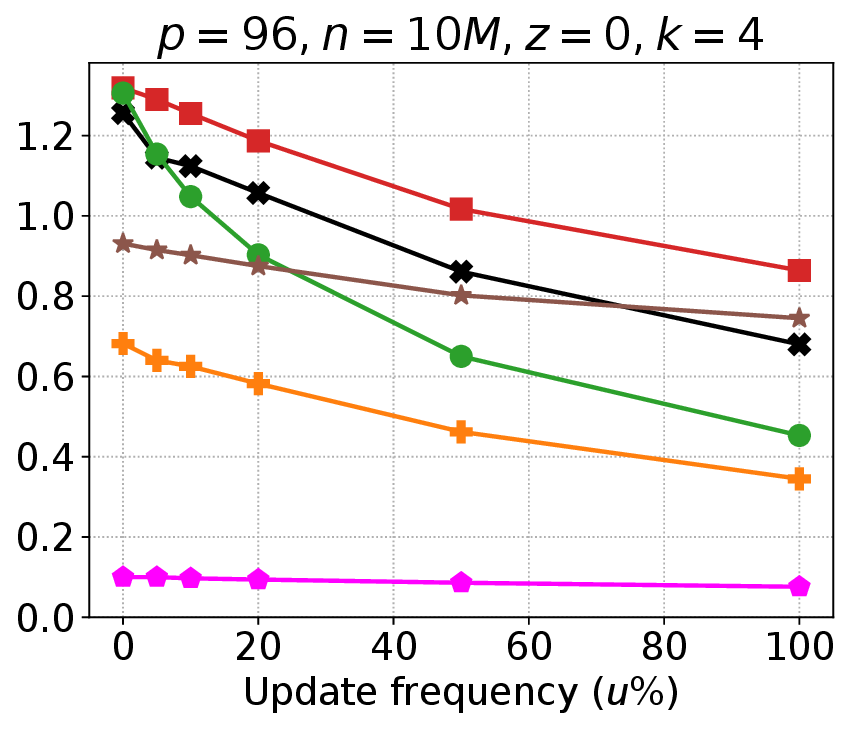}
        \includegraphics[height=3.4cm]{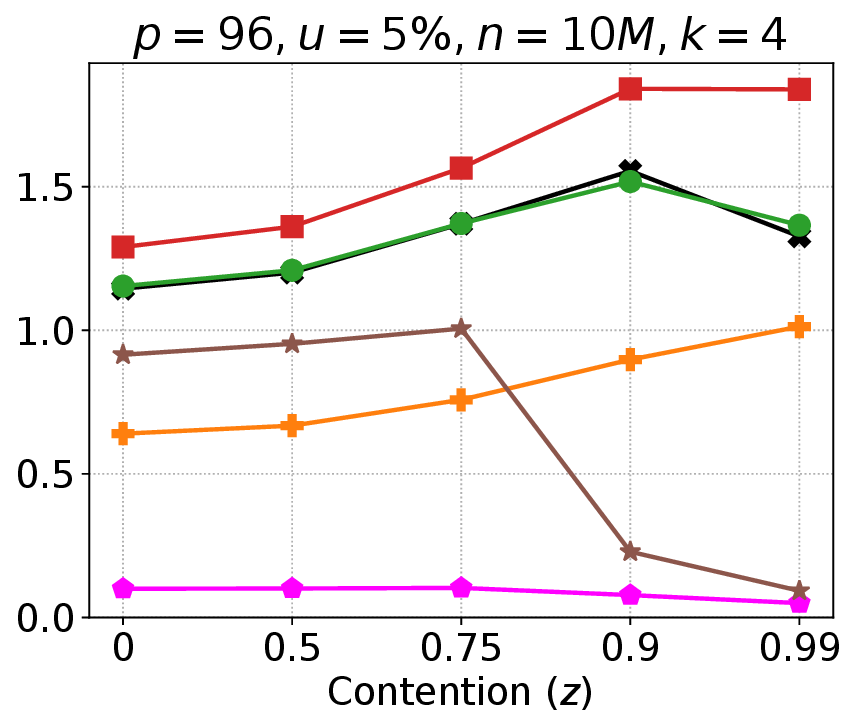}
        \includegraphics[height=3.4cm]{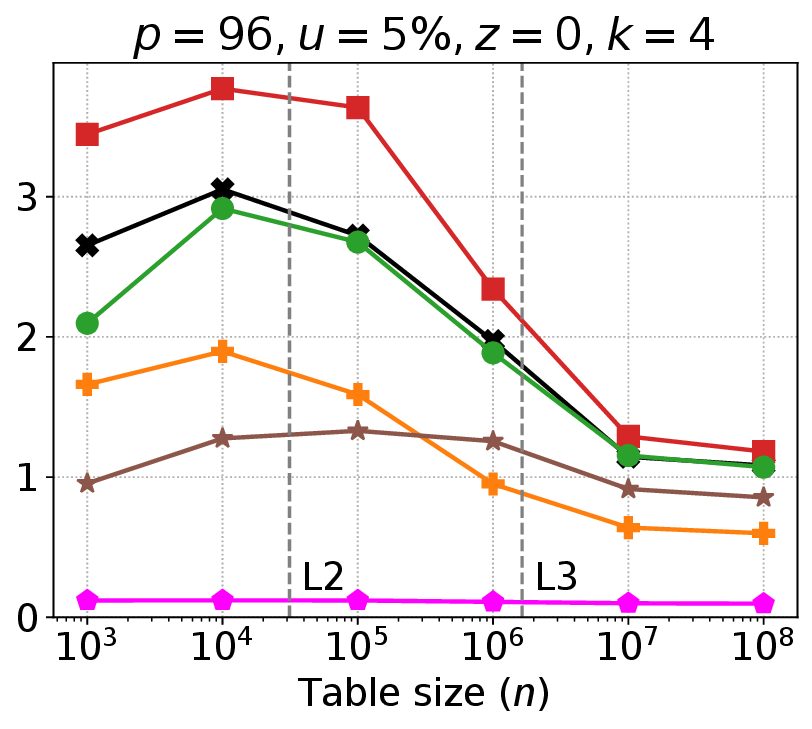}
        \includegraphics[height=3.4cm]{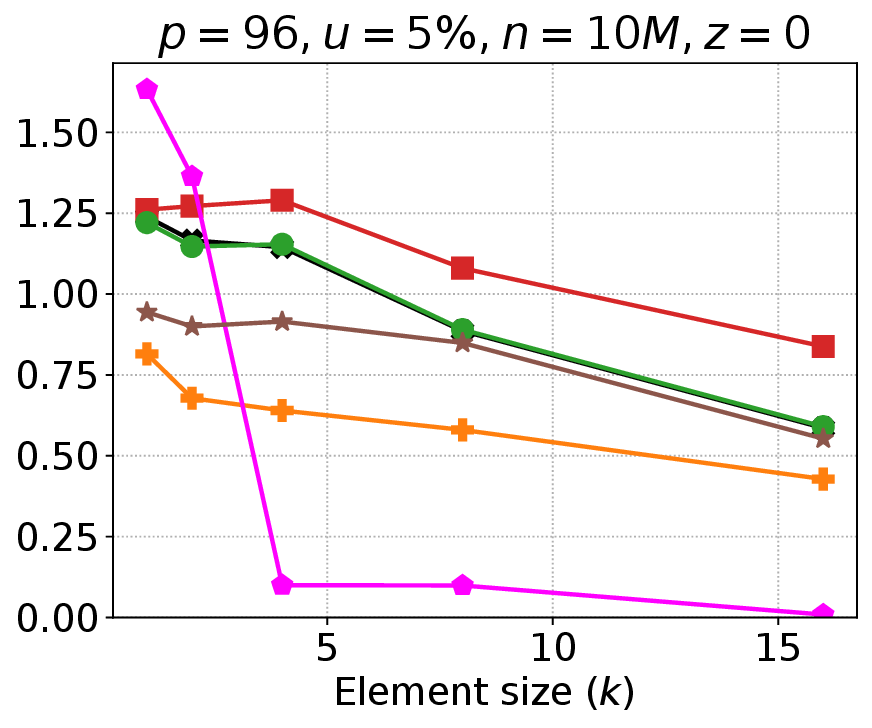}

        \includegraphics[height=3.4cm]{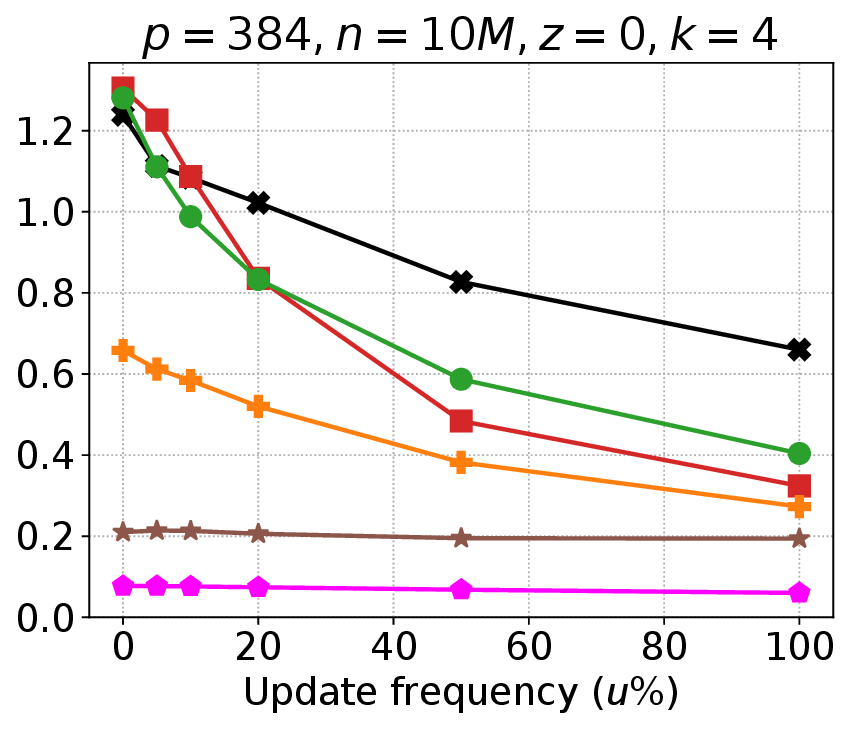}
        \includegraphics[height=3.4cm]{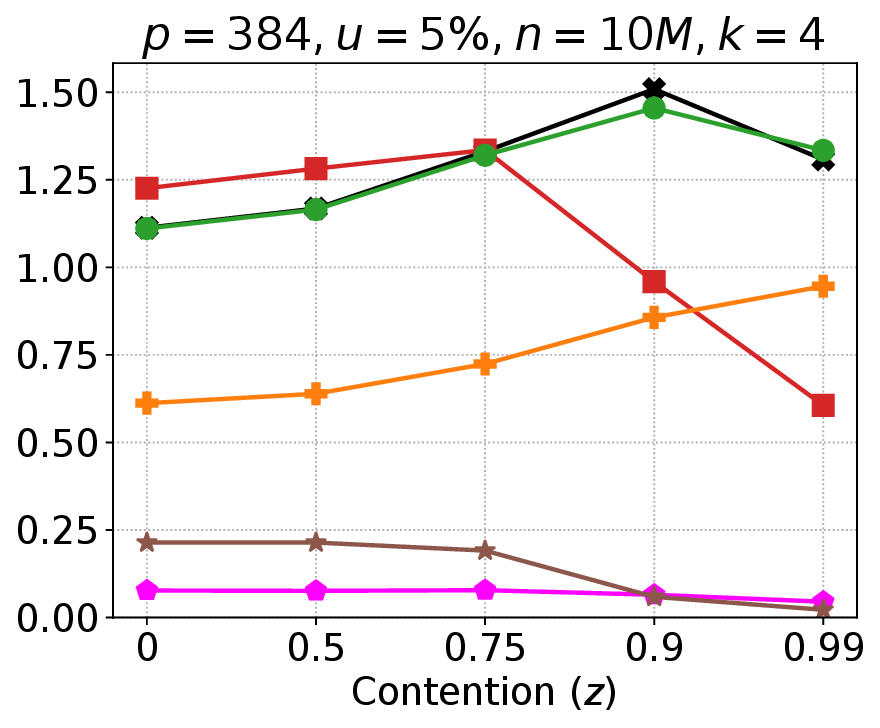}
        \includegraphics[height=3.4cm]{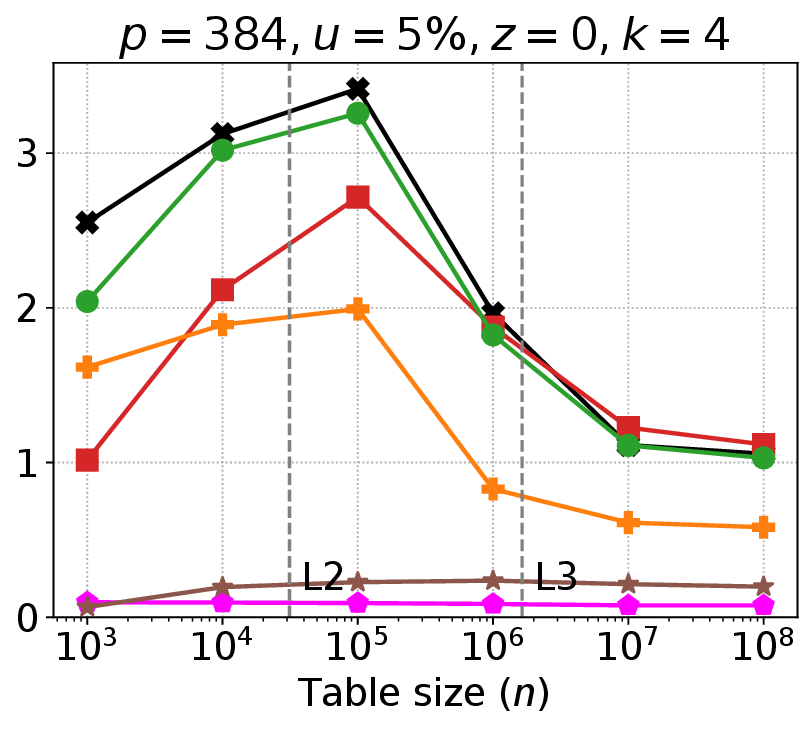}
        \includegraphics[height=3.4cm]{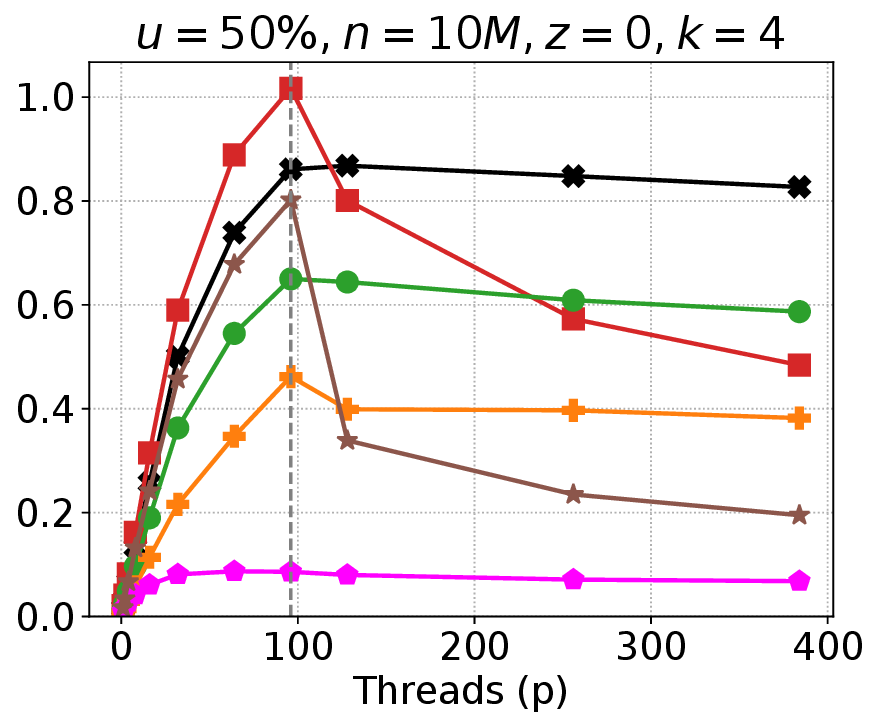}

    \captionsetup{font=small}
    \caption{Throughput in billions of operations per second for various big atomic implementations across varying thread counts $(p)$, update frequencies $(u)$, contention parameters $(z)$, table sizes $(n)$, and element size measured in number of words $(k)$.}
    \label{fig:c7i-hasharray}
\end{figure}

\subsection{Big atomic microbenchmarks}\label{sec:hasharray-bench}
%\paragraph{Big atomic microbenchmarks}
We first benchmark the performance of our big atomic implementations against several baselines. The baseline implementations include (1) a SeqLock (\emph{SeqLock}), (2) an atomic pointer to a heap-allocated value (\emph{Indirect}), (3) an external lock pool (\emph{SimpLock}), and (4) \lstinline{std::atomic}. The \lstinline{std::atomic} used by GCC compiles to hardware atomic operations for up to two words,
and uses the GNU libatomic library on Linux for larger sizes, which is based on using a small number of shared on locks.  We benchmark our implementations of the vanilla wait-free big atomic from Section~\ref{sec:wait-free-load-cas} (\emph{\cawf{}}) and our lock-free memory-efficient big atomic from Section~\ref{sec:lock-free-load-cas} (\emph{\came{}}).

Our microbenchmark implements a map from the integers $(0,\ldots,n-1)$ to values.   It is implemented as an array of elements, each a big atomic containing a full/empty flag, and a value (array of integers).   A find does a atomic load of the element, and returns the value if full.  An insert does a load, and if empty does a CAS to try to swap in the value.  A delete does a load, and if full tries to empty with a CAS.
We measure the number of operations executed across all threads in billions of operations per second (Bop/s). The parameter space of our benchmarks ranges over $p$: the number of threads, $n$: the number of big atomics, $u$: the percentage of updates (equal mix of inserts and deletes), $z$: a Zipfian \cite{YCSB} parameter controlling the contention (chosen between $0$ = uniformly random, and $1$ = every operation always selects the same big atomic), and $w$: the size of the element held by the big atomic (flag+value) measured in 64-bit machine words. 
We align the elements at 64-byte boundaries so even 1-word values do not fit in cache at $n = 10$ Million.

We report on eight benchmarks in Figure~\ref{fig:c7i-hasharray}.  These are all for the single-socket (48 core) machine.
Each graph corresponds to varying one parameter while holding the others fixed.  
Six correspond to varying $u$, $z$ and $n$ each for two values of $p$, one without oversubscripion ($p = 96$, and one with oversubscripion $p=4 \times 96 = 384$).     One corresponds to varying $w$ and one corresponds
to varying $p$.
The defaults are $n =10$Million, $z=0$ (uniform), $u = 5$\%, and $w=4$ (i.e., 32 bytes).

\begin{figure}

    \centering
    
    \begin{subfigure}{\columnwidth}
      \centering
  		\includegraphics[width=0.95\columnwidth]{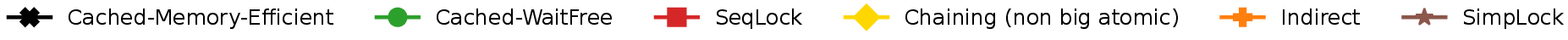}
    \end{subfigure}

    \medskip

    \includegraphics[height=3.4cm]{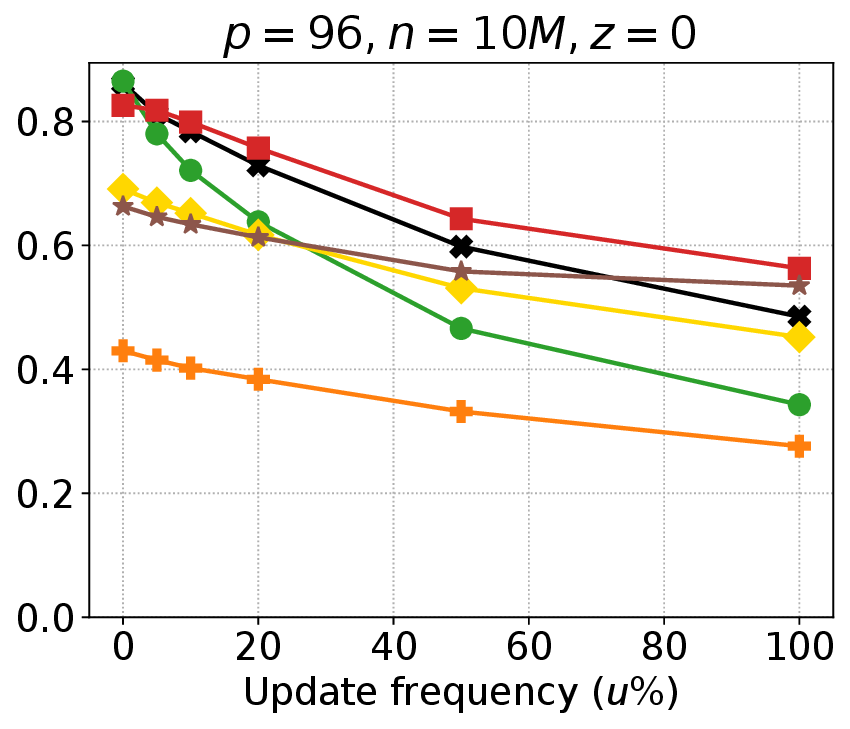}
    \includegraphics[height=3.4cm]{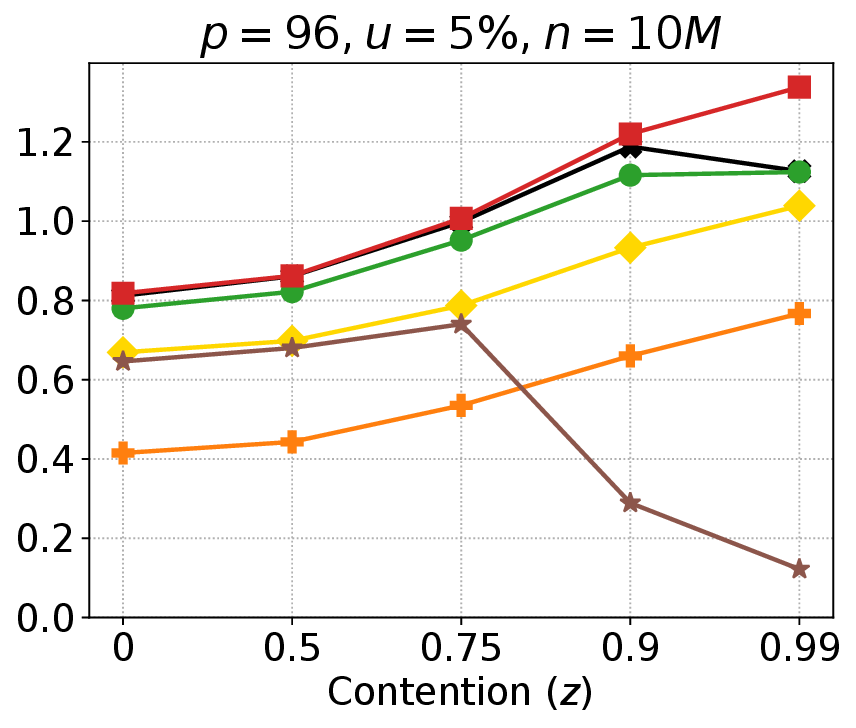}
    \includegraphics[height=3.4cm]{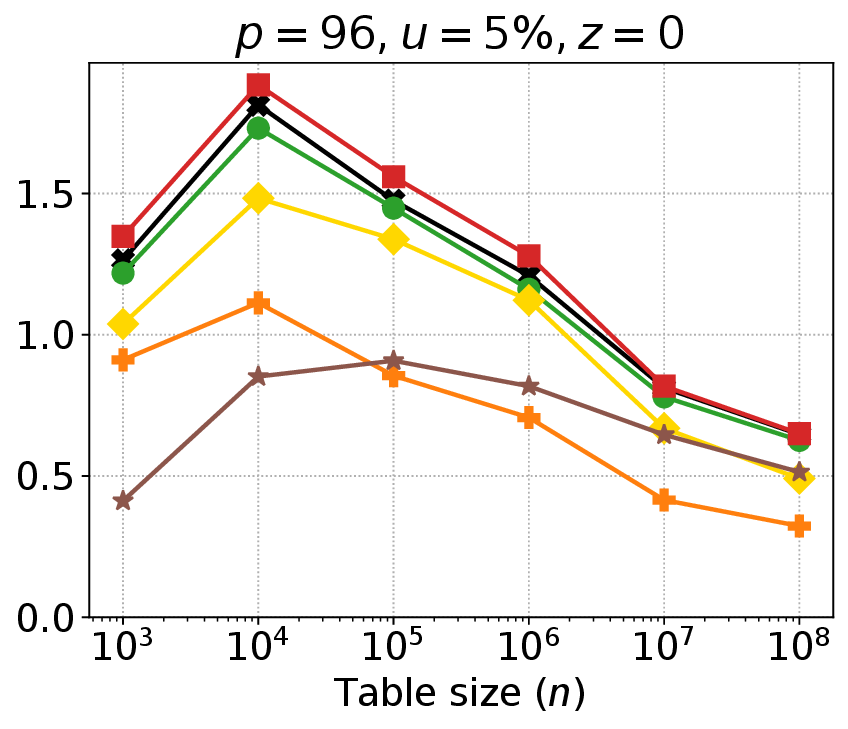}
    {\transparent{0}\includegraphics[height=3.4cm]{graphs/c7i-pldi/hashlist-mops-vs-n-u5-p96-z0-w0.eps}}

    \includegraphics[height=3.4cm]{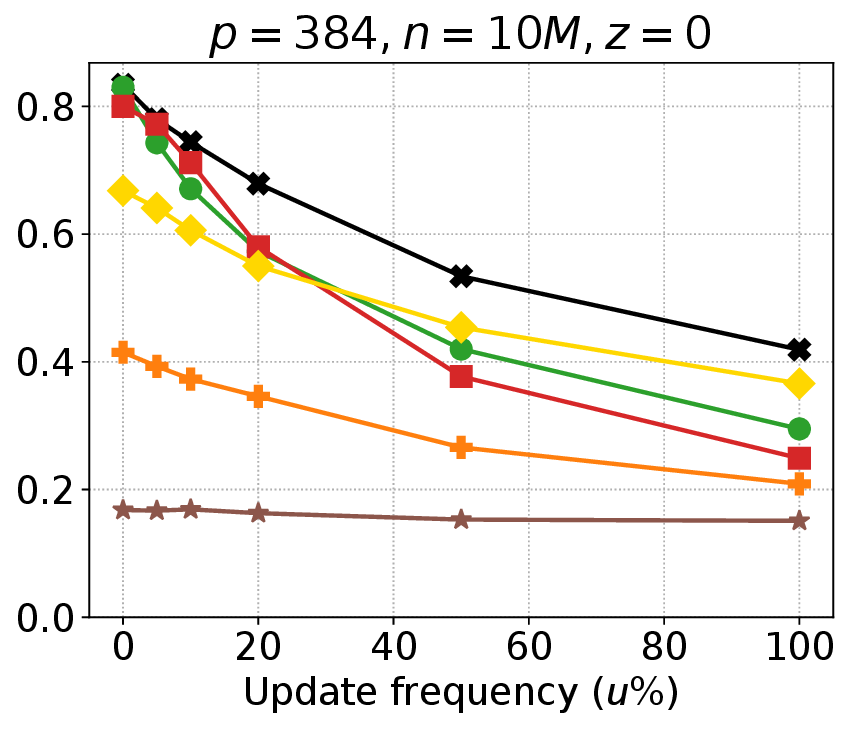}
    \includegraphics[height=3.4cm]{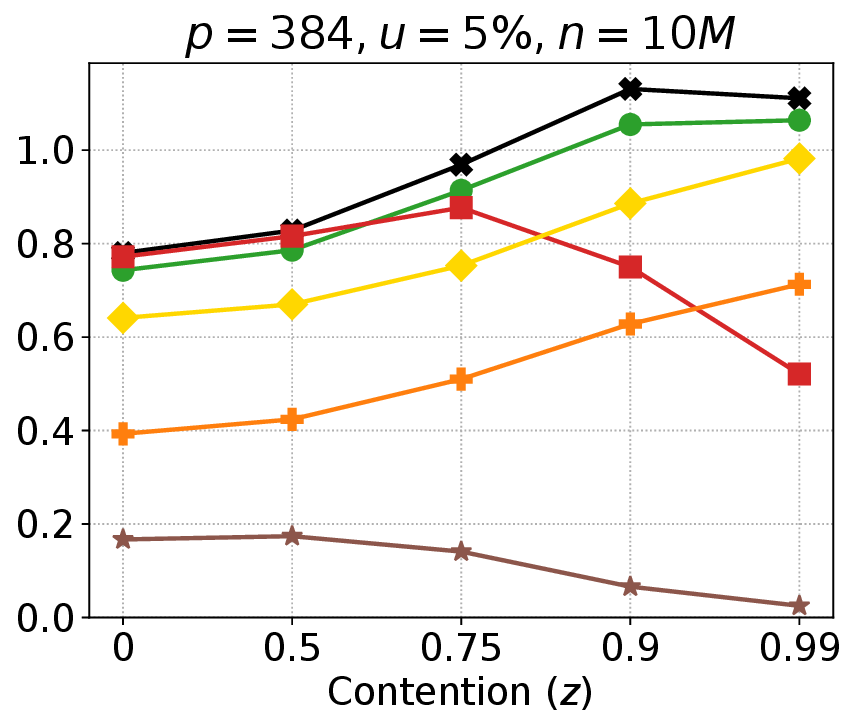}
    \includegraphics[height=3.4cm]{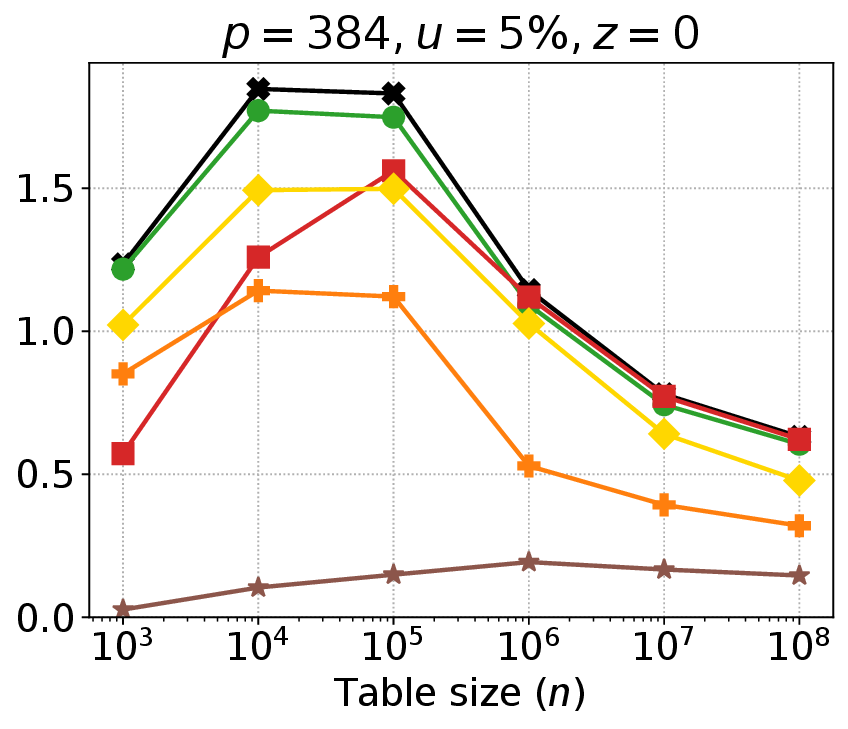}
    \includegraphics[height=3.4cm]{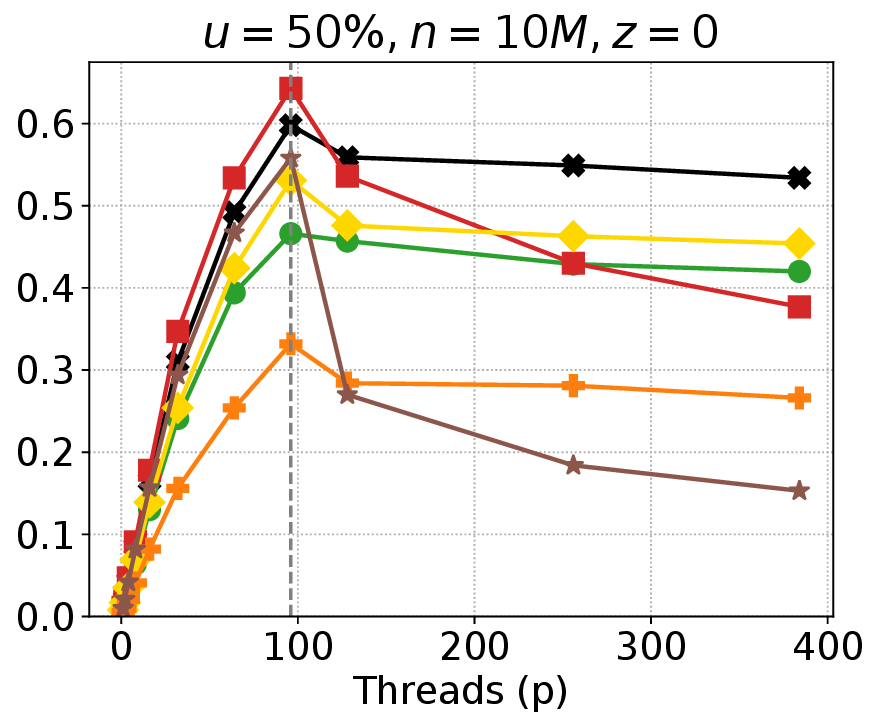}

    % \captionsetup{justification=left}
    \caption{Throughput in billions of operations per second for our \ourHashtable{} hashtable implementations with big atomics and a separate-chaining baseline that does not use big atomics across varying thread counts $(p)$, update frequencies $(u)$, contention parameters $(z)$ and table sizes $(n)$.}
    \label{fig:c7i-hash-list}

\end{figure}

\myparagraph{Varying $u$}  The first pair of benchmarks (the two on the left) vary $u$, the update rate, from $0\%$ (read only) to $100\%$ (update only).    We run this both without oversubscription ($p = 96$) and with oversubscription ($p=384$).   Without oversubscription \seqlock{} perform the best, and up to 20\% better than \came{} at 100\% update.   Our algorithms need to install both indirect and a cached version on updates and therefore overhead is higher for higher update rates.    The \cawf{} has a particularly high overhead due to the cost of memory management and the fact that there are enough backup nodes that they do not fit in the last-level cache.  The \indirect{} algorithm performs around twice as bad as the others, as would be expected since each access involves two cache misses instead of one.   Indeed, compared to \seqlock{} it is more than twice as bad due to the additional cost of memory management.   The \simplock{} performs worst at low updates since it requires locks on both loads and updates while the \seqlock, \cawf, and \came{} have low overheads for loads.    At 100\% updates, \simplock{} slightly outperforms \came.   The libatomic implementation performs badly across the whole range, as it does in the rest of our experiments.    This is due to their use of a very small set of shared locks causing very high contention.

With oversubscription the story noticeably changes.   Here \seqlock{} significantly suffers at higher updates, being about twice as slow as \came{} at 100\% updates.   As with other lock-based algorithms this is due to threads being descheduled while holding the lock and blocking other threads from making progress, or creating additional overhead for thread switches.   The \simplock{} algorithm performs badly at low and high update rates since it takes locks for both loads and updates, while \seqlock{} only takes locks on updates.    We note that the lock-free algorithms (\came, \cawf, and \indirect{}) all perform very similarly with and without oversubscription.  
%This is to be expected given that a suspended thread will not block other threads.

\myparagraph{Varying $z$}  The second pair of benchmarks controls $z$, the contention, from $0$ (uniform) to $0.99$ (very high contention).   % We note that at $z = .99$ is what is suggested by the YCSB benchmark as a "real-world" load~\cite{YCSB}.
%The update rate is $5\%$, the table size is $10$M, $w$ is four ($32$-byte elements), and $p$ is $96$. 
As $z$ increases, contention becomes high, but so does the number of machine cache hits, which can therefore sometimes counter-intuitively improve performance.  Without oversubscription \seqlock{} and our cached structures perform similarly (within 10\% or so) until the highest values of $z$, at which point the \seqlock{} has noticeably better performance.  This is likely because \seqlock{} has less to do on an update and update contention becomes high at $x = .99$, even though $u = 5$\%, since a large percentage of the operations are hitting a single location.   \simplock{}'s performance drops dramatically at high contention due locks being required on both the loads and updates making contention extremely high.

With oversubscription, again, the story changes.  As expected, the lock-free algorithms are hardly affected, and the locked-based ones are dramatically affected.   Although \seqlock{} is about 8\% faster than \came{} with a uniform distribution (and hence almost no contention), it has 1/2 the performance at high contention.

\myparagraph{Varying $n$}  The third pair of benchmarks varies $n$, the table size. % The update rate is $5\%$, the contention is $z = 0$, $w$ is four ($32$-byte elements), and $p$ is $96$. 
We note that at small $n$ the contention becomes higher causing lower performance, but at large $n$ the data no longer fits in cache, also causing lower performance.  This is true for all the algorithms.   We have marked
on the graphs the approximate points at which the data no longer fits within the level 2 (L2) and level 3 (L3) cache of the machine.   We also note that overhead due to various tests and non-cache-miss causing instructions becomes more important at smaller sizes where the cost is not dominated by the memory traffic.  Given this, \seqlock{} does particularly well compared to others at small table sizes when not oversubscribed (up to 25\% better) since its overheads are low.  On the other hand, at larger sizes \seqlock{} and \came{} perform similarly since the number of cache misses are similar.

With oversubscription, the effect is inversed.   In particular \seqlock{} performs more than a factor of two worse than \came{} at small size due to the contention on a smaller number of entries.   At larger sizes the performance becomes similar, but now because \seqlock{} has almost no contention causing no conflicts that could potentially delay other threads.

\myparagraph{Varying $w$} The next benchmark (top right) varies the element size from $1$ word ($8$ bytes)
to $16$ words ($128$ bytes).   Unsurprisingly here, libatomic takes its first and only convincing victory at $w=1$ since the operations compile into machine instructions (a single-word read or single-word compare-exchange). At $w=2$, libatomic very marginally outperforms the competition since it can use a hardware-available double-width compare-exchange, though this is a bit slower than a single-word compare-exchange. Beyond this point libatomic retakes its place at dead last.  All scale reasonably well to sizes that require multiple cache lines.   We note that for 128 bytes all the implementations require touching at least three cache lines---two for the data, and at least one for the meta-data (version count, backup pointer, etc.).

\myparagraph{Varying $p$}  The final benchmark controls $p$, the number of threads, scaling from $1$ up to $384$, which is four times the number of SMT threads of the machine.  A vertical line depicts the number of SMT threads (96), beyond which is at oversubscription.   The experiments show that even at mild oversubscription the lock-based algorithms suffer.   The lock-free algorithms are mostly flat going into oversubscription.     Although it is hard to read the performance at one thread due to the scale, we note that both \seqlock{} and \came{} get around 50x speedup on 48 cores (96 threads).

\subsection{Chaining hashtable with inlining benchmarks}\label{sec:hashlist-bench}

Our next benchmark compares the performance of a separate-chaining hashtable with one that is implemented using big atomics to inline the first element of a chain, as described in Section~\ref{sec:hashtables}.  We include a baseline implementation, \chaining{}, which implements the same algorithm but \emph{without} inlining the first element using a big atomic. We compare the performance of the same big atomics as in the microbenchmark, excluding libatomic since it was always strictly the worst.

The benchmark loop is similar to our first benchmark set.  We create a hashtable of size $n$ storing eight-byte keys and eight-byte values (so the chain nodes which are stored in the big atomics are $24$ bytes in total, since they store the key, value, and a pointer to the next link in the chain), then each of $p$ threads repeatedly chooses a random key from a Zipfian distribution with parameter $z$ and randomly performs a find, insert, or delete. Insert and delete are performed $u\%$ of the time, choosing uniformly between the two, with find happening otherwise $(100-u)\%$ of the time. We report results using the same cross section of parameters to allow for the direct comparison between the performance of the big atomics on their own and their performance when used as a building block for a hashtable.   We use a load factor of one, with the size rounded to the next power of two.  The results are shown in Figure~\ref{fig:c7i-hash-list} on the single-socket (48-core) machine.

Overall the performance using the different big-atomic strategies looks similar to the microbenchmarks.  The differences among the algorithms are not quite as large, however, since there are other overheads in the hash tables.   In particular we note that \came{} is now even closer in performance to \seqlock{} when not oversubscribed,  while \seqlock{} still crashes in performance with oversubscription. % although at a higher number of software threads.

\myparagraph{With and without inlining}
We now focus on how the the big atomic versions differ from \chaining{}, where the first link is not inlined.   In general the \seqlock{} and \came{} outperform \chaining{} by between 10\% and 35\%.   This is due to the reduced number of cache line accesses since the value is often found in the first link.  We note, however, that \chaining{} does not always require an extra cache-line access since if a find searches a bucket that is empty, all versions will only have one memory access (i.e., none have to follow a link).   \chaining{} does significantly better than \seqlock{} under oversubscription as with the microbenchmarks.
Importantly \came{} always does better than \chaining{}, and often significantly so.

\cawf{} does better than \chaining{} for lower update rates (< 20\%) but does worse at higher rates.  This is due to the high cost of memory management require by \cawf{}, as discussed in the microbenchmarks.  At high update rates even \simplock{} does better than \chaining{} 

The relative performance of all methods does not change much depending on size.    There, however, seems to be a larger performance drop for the smallest size than there was for the microbenchmark.   This is likely due to the fact that operations on each slot are more expensive (possibly following multiple pointers in the linked list) so the contention when using a small number of buckets is higher.

\subsection{Comparison to existing open-source hashtables}

To put the benchmarks of Section~\ref{sec:hashlist-bench} into perspective, we performed the same experiments on a wide set of open-source concurrent hashtables. We compared \lstinline{boost::concurrent_flat_map} (\emph{Boost}), \lstinline{libcuckoo}, Facebook's \lstinline{folly::ConcurrentHashMap} (\emph{Folly}), Intel's \lstinline{tbb::concurrent_hash_map} (\emph{TBB}), \lstinline{seq::concurrent_map}, and \lstinline{parallel_hashmap} (which is built on top of Google's Abseil's hash tables).   

\begin{figure}

    \begin{subfigure}{0.99\columnwidth}
        \centering
        \includegraphics[height=3.7cm]{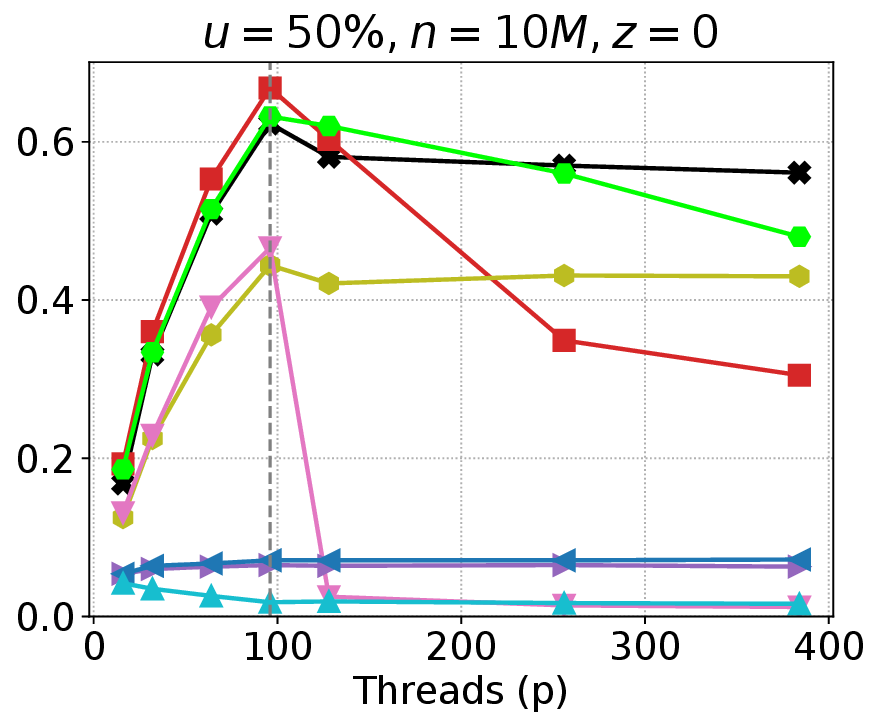}
        \includegraphics[height=3.7cm]{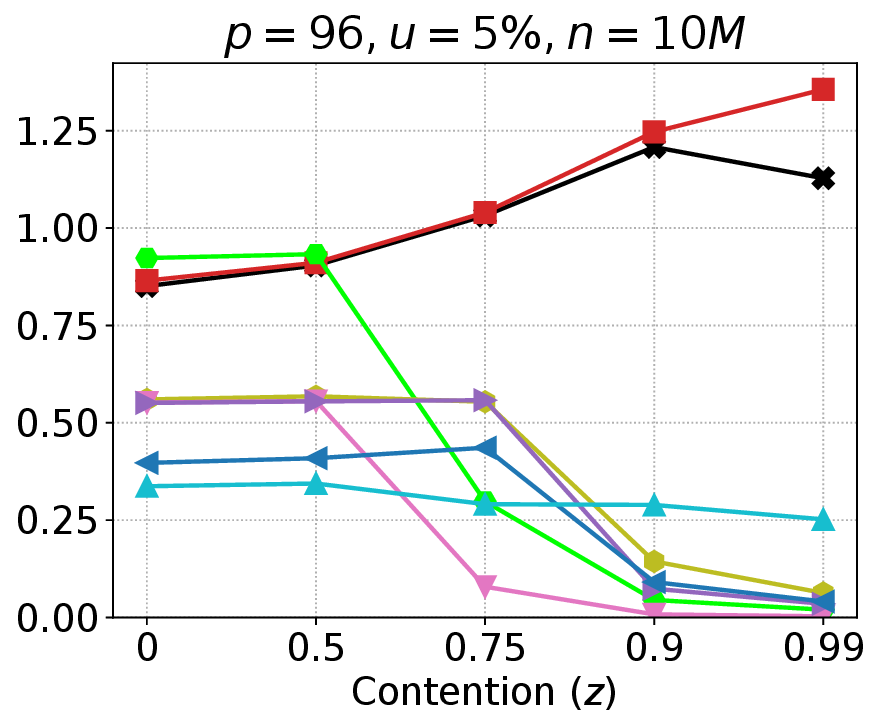}
        \qquad
        \includegraphics[width=0.3\columnwidth]{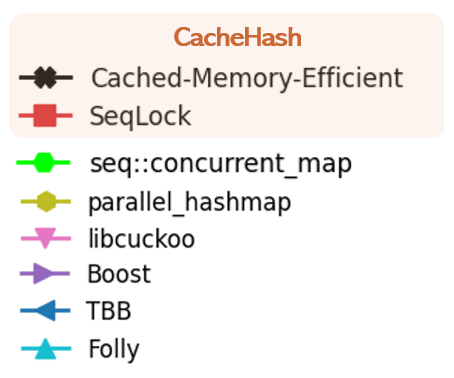}

    \end{subfigure}
    \caption{Throughput in billions of operations per second for two of our \ourHashtable{} hashtables versus existing open-source concurrent hashtables across varying thread counts $(p)$ and contention parameters $(z)$.}
    \label{fig:c7i-hashtable}
\end{figure}

These hash tables are growable, but for a fair comparison we initialize them to the final size so they do not grow during the experiments.   These tables also support more general types, but we apply them to 64-bit integer keys and values.    Since our implementation is more limited 
%We should note that these hashtables are all growable and support arbitrary key and value types.
%\Guy{But we should point out that we initialize the tables for the others to be the right size so they never need to grow....otherwise it seems like a much more unfair comparison.   Indeed the growable version of parlayhash is pretty much no slower than the non-growing version if you don't grow.}
  it is expected that their performance is not necessarily as strong as our implementations.  The goal of these comparisons is to (1) assess the benefits of the big-atomic-based hashtable when keys/values are trivially copyable, which is common, and (2) to measure the amount of room for improvement that one could hope to obtain \emph{in the best case} compared to existing implementations.

Our results are depicted in Figure~\ref{fig:c7i-hashtable} on a single-socket (48-core) machine. Our \ourHashtable{} implementations based on SeqLocks and \came{} consistently perform the best across almost all parameter choices, with the SeqLock implementation falling behind under oversubscription as usual. Of the open-source tables, there is no implementation that consistently performs the best; the best or near-best implementation in one workload can be the worst or near worst in another. At low contention, \lstinline{seq::concurrent_map} scales as well as our best table based on \came{}. However, at high contention, it becomes the second-worst of all the tables, beating only \lstinline{libcuckoo}, which is the second-best for low contention and low thread counts, but the worst otherwise. With a high update rate and low contention, Folly exhibits the worst performance, but at high contention and low update rate, the best. As a general trend, it is noteworthy that every open-source table tends to lose performance as contention increases, while ours gain performance due to improved cache locality.

\begin{figure}

    \begin{subfigure}{\columnwidth}
      \centering
  		\includegraphics[width=\columnwidth]{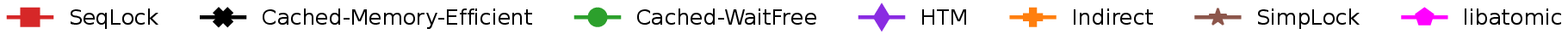}
    \end{subfigure}

    \medskip

    \begin{subfigure}{0.99\columnwidth}
        \centering
        \includegraphics[height=3.4cm]{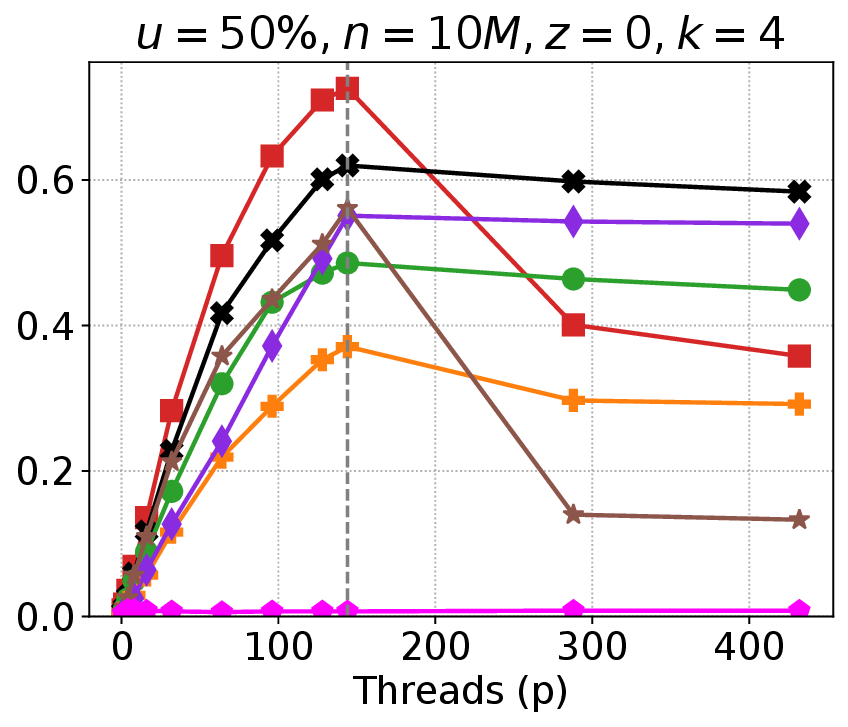}
        \includegraphics[height=3.4cm]{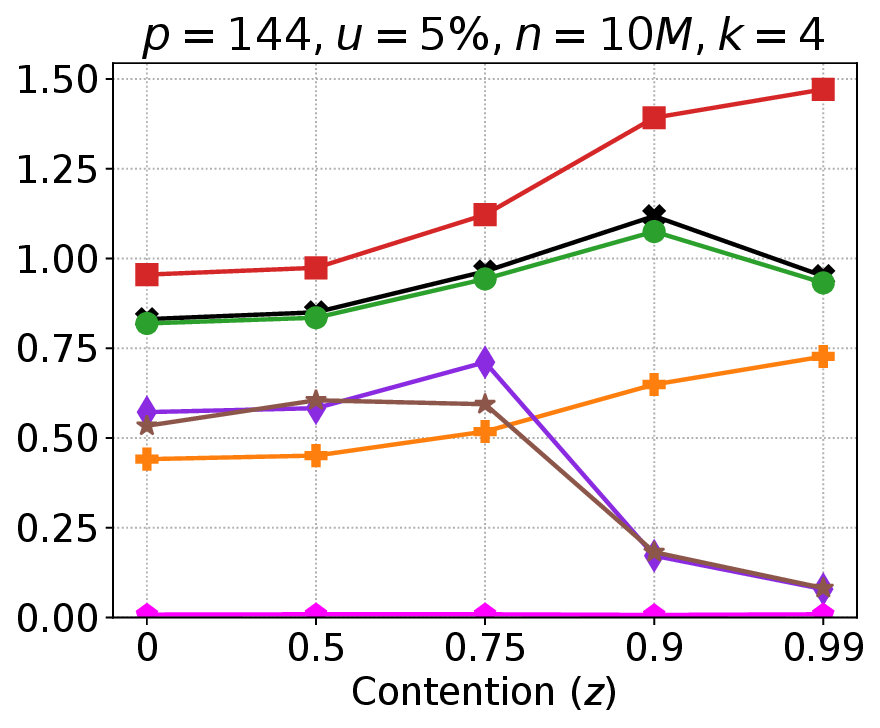}
        \includegraphics[height=3.4cm]{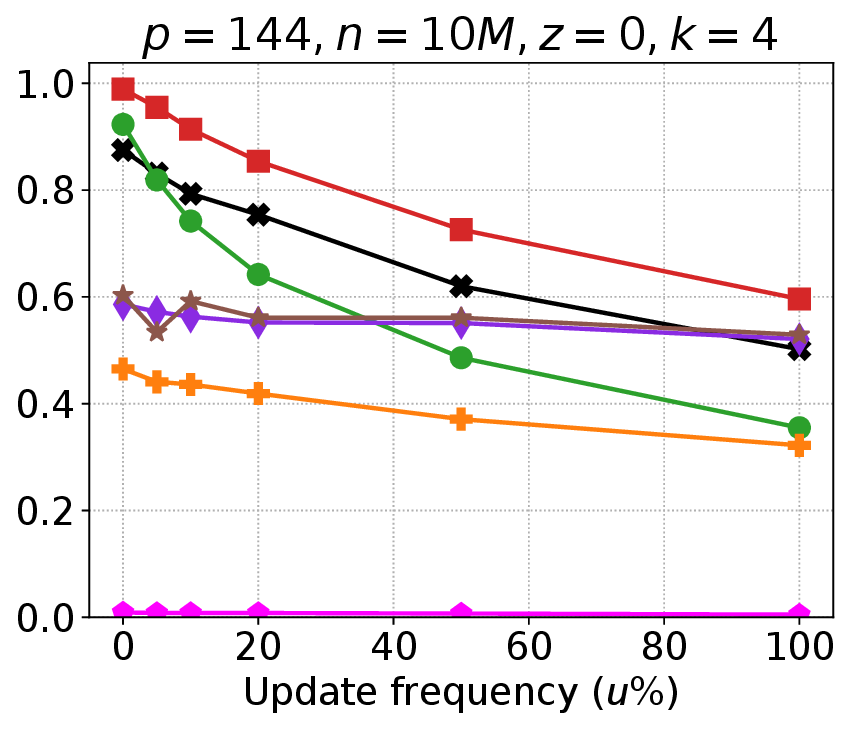}
        \includegraphics[height=3.4cm]{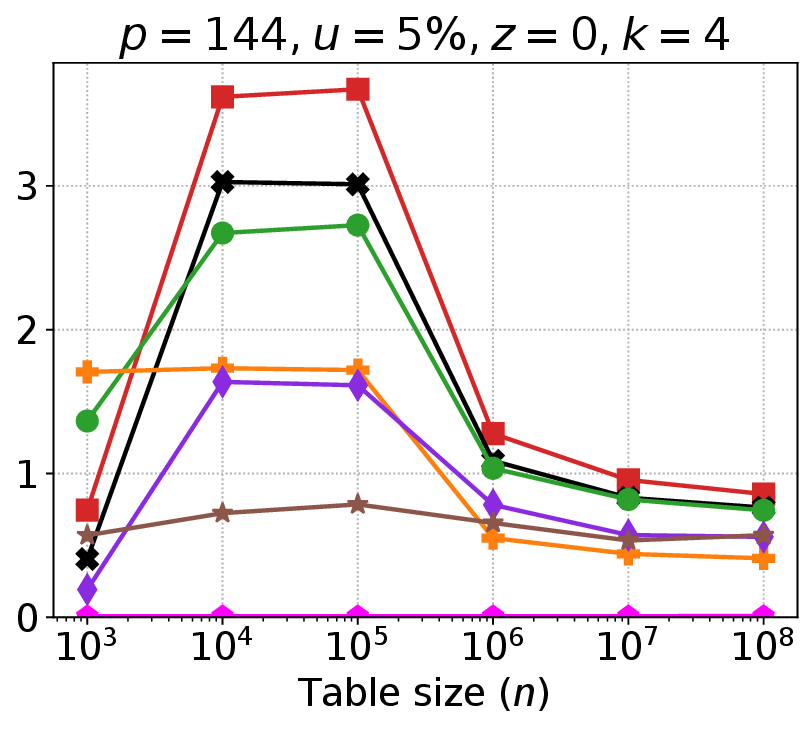}
    \end{subfigure}
    \caption{Throughput in billions of operations per second for various big atomic implementations including hardware transaction memory (HTM) on an older four-socket machine across varying thread counts $(p)$, contention $(z)$, update frequency $(u)$, and table size $(n)$.
    }
    \label{fig:aware-hasharray}
\end{figure}

\subsection{Comparison against HTM}

While Intel discontinued hardware transactional memory in 2021, due to security bugs \cite{htmdead}, we include some experiments on our older four-socket machine (72 cores) which supports Restricted Transactional Memory (RTM), a form of HTM. Our implementation tries to perform the operation using a hardware transaction ten times before falling back to a spinlock, since RTM in general is not guaranteed to ever succeed. In Figure~\ref{fig:aware-hasharray}, we show the results of varying the thread count $p$, the contention parameter $z$, the update frequency $u$, and the table size $n$, using our microbenchmark. 
%the same set of experiments that we performed on the big atomics in Section~\ref{sec:hasharray-bench}.
With low contention, HTM scales well with the number of threads, but still lags behind \came{} at all thread counts. However, we see that as contention increases, HTM's performance degrades drastically. In particular, as $z$ surpasses $0.75$ its performance plummets while SeqLock and \came{} continue to perform well or even improve. Interestingly, for HTM, the update frequency ($u$) barely affects the performance, while most other methods lose performance as $u$ increases. Nonetheless, HTM still can not outperform the other methods even at the highest value of $u$ (100\% updates). As we vary the table size $n$, HTM performs up to a factor of $2$ worse than our big atomics, with a smaller gap once the data is too big to fit in cache.

\subsection{Memory Usage}

The memory used by all implementations is limited to the big-atomic structures themselves and, for those that have indirect values, the memory for the indirect structures.    The bounds given in Table~\ref{tab:properties} therefore translate to the practical implementation as well, although the constants in the big-O need to be filled in.
Here we specify them for our implementation. \indirect{} uses $n(k + 1) + c_h p(p + k)$, \simplock{} uses $n(w + 1)$, \seqlock{} uses $n(w+1)$, \cawf{} uses $2n(k + 2) + c_h p (p + k)$, and \came{} uses $n(k + 2) + c_h p (p + k)$.   Here $c_h$ is a parameter used in the hazard pointer collector and can be tuned (raising it will speed up the implementation due to less frequent collection). Experimentally, a value for $c_h$ that makes the cost of collection negligible is one such that there are around one thousand garbage elements per thread.   Therefore one can consider the $c_h p(p + k)$ term as  $\leq p * 1000$.  For modestly large $n$ this is a low-order term.

    \section{Conclusion and Future Work}\label{sec:conclusion}

% We shine light on Big Atomics as a useful primitives (we bring attention to the importance of these) through showing efficient implementaitons and useful applications. 
% We were surprised at the efficiency of seq locks (they work surprisingly well under low-contention).
% What we show: concluding on experimental results (CaMemEff is almost as fast a seq locks but resilient to oversubscription)
% Something about fully general hashtable that reaps speed out of big atomics
% Other applications
In this paper, we have highlighted big atomics---a useful primitive that has largely been overlooked.
In particular, we designed and studied efficient algorithms for big atomics, contrasted the empirical performance of implementations across the design space of lock-free, lock-based, and HTM-based variants, and underscored the usefulness of these primitives in higher-level concurrent algorithms through the design and evaluation of the \ourHashtable hash table.
The study shows that the SeqLock-based implementation performs very well, however, it falters under oversubscription.
Ultimately, the lock-free algorithm based on an inlined fast-path, \came, prevailed as the best alternative across workloads---nearing the performance of \seqlock in undersubscribed experiments, but sustaining near-peak performance to high-levels of oversubscription.

Through the design of \ourHashtable, we demonstrated that the benefits of efficient big atomic primitives are tangible in higher level applications.
Along with outperforming the chaining hash table baseline, \ourHashtable---which enhances chaining with the use of big atomics to reduce indirection in most hash buckets---performed favorably when compared to state-of-the-art concurrent hash tables.
We believe these performance gains due to big atomics are valuable, and are thus working on the design of a fully general hash table that builds on this idea.

We expect there are further improvements that can be made to the implementation of big atomics, and more applications that can be identified that would be simplified or improved by using them.

\section*{Acknowledgement}
This work was supported by NSF Grant CCF-2119352.

    \clearpage
    \bibliographystyle{acm}
    \bibliography{strings,biblio}
    
\end{document}